\newcommand{\be}{\begin{equation}}
\newcommand{\en}{\end{equation}}
\newtheorem{thm}{Theorem}
\newtheorem{cor}[thm]{Corollary}
\newtheorem{proposition}[thm]{Proposition}
\newtheorem{defi}{Definition}[section]
\newtheorem{lem}[defi]{Lemma}
\newtheorem{Theo}{Theorem}[section]
\newtheorem{remark}[Theo]{Remark}
\newcommand{\bedefin}{\begin{defi}}
\newcommand{\findefi}{\end{defi} \medskip}
\newcommand{\betheo}{\begin{theorem}$\!\!${\bf \,\,\,}}
\newcommand{\entheo}{\end{theorem}}
\newcommand{\enth}{\end{theorem}}
\newcommand{\becor}{\begin{cor}$\!\!${\bf .}}
\newcommand{\encor}{\end{cor}}
\newcommand{\belem}{\begin{lem}$\!\!${\bf }}
\newcommand{\enlem}{\end{lem}}
\newcommand{\bea}{\begin{eqnarray}}
\newcommand{\ena}{\end{eqnarray}}
\newcommand{\beano}{\begin{eqnarray*}}
\newcommand{\enano}{\end{eqnarray*}}
\newcommand{\bee}{\begin{enumerate}}
\newcommand{\ene}{\end{enumerate}}
\newcommand{\bei}{\begin{itemize}}
\newcommand{\eni}{\end{itemize}}
\newcommand{\betab}{\begin{tabular}}
\newcommand{\entab}{\end{tabular}}
\newcommand{\bd}{\begin{displaymath}}
\newcommand{\ba}{\mathbf a}
\newcommand{\bb}{\mathbf b}
\newcommand{\bp}{\mathbf p}
\newcommand{\bq}{\mathbf q}
\newcommand{\bA}{\mathbf A}
\newcommand{\g}{G_{\hbox{\tiny{NC}}}}
\newcommand{\gd}{\hat{G}_{\hbox{\tiny{NC}}}}
\newcommand{\G}{\mathfrak{g}_{\hbox{\tiny{NC}}}}
\DeclareMathOperator{\tr}{Tr}
\begin{document}

\title{Wigner functions for gauge equivalence classes of unitary irreducible representations of noncommutative quantum mechanics}
\author[1]{S. Hasibul Hassan Chowdhury\thanks{shhchowdhury@gmail.com}}
\author[2]{Hishamuddin Zainuddin\thanks{hisham@upm.edu.my}}
\affil[1,2]{Laboratory of Computational Sciences and Mathematical Physics, Institute for Mathematical Research, Universiti Putra Malaysia, 43400 UPM Serdang, Selangor, Malaysia}
\affil[2]{Malaysia-Italy Centre of Excellence for Mathematical Sciences, Universiti Putra Malaysia, Malaysia}
\date{}

\maketitle

\begin{abstract}
While Wigner functions forming phase space representation of quantum states is a well-known fact, their construction for noncommutative quantum mechanics (NCQM) remains relatively lesser known, in particular with respect to gauge dependencies. This paper deals with the construction of Wigner functions of NCQM for a system of 2-degrees of freedom using 2-parameter families of gauge equivalence classes of unitary irreducible representations (UIRs) of the Lie group $\g$ which has been identified as the kinematical symmetry group of NCQM in an earlier paper. This general construction of Wigner functions for NCQM, in turn, yields the special cases of Landau and symmetric gauges of NCQM.
\end{abstract} 

\section{Introduction}
\label{sec:intro}
Noncommutative quantum mechanics (NCQM) has been studied extensively as a possible modification of quantum mechanics at the Planck scale for which the position observables are noncommuting (hence dimension of configuration space is at least two). The motivation of the introduction of NCQM directs us back to the seminal work of Snyder \cite{Sny} where he studied the quantized structure of space-time in a Lorentz invariant fashion. Such a model of space-time in scales as small as Planck length is also proposed, among others by Doplicher et al. \cite{Doplicheretal} in order to avoid creation of microscopic black holes to the effect of losing the operational meaning of localization in space-time. NCQM can also be realized as nonrelativistic approximation of noncommutative field theory (see \cite{Hoetal}).  A noncommutative quantum field theory (NCQFT) is the one where the fields are functions of space-time coordinates with spatial coordinates failing to commute with each other. A multitude of articles appearing in the literature (see, for example, \cite{Douglasetal,Connesetal,Seibergetal,Chuetal1,Chuetal2,Ardalanetal,Schomerus}) focus on how NCQFT is related to the studies of {\em noncommutative geometry} and {\em string theory}. On the other hand, how noncommutative geometry can be studied independently from the unitary irreducible representations of NCQM are all explored in \cite{plethora}. 

By instituting noncommutativity of the position observables, one can also take it to be in the context of restoring the symmetry between phase space coordinates when treating a quantum system of particles in a magnetic field where the momenta are noncommuting. The underlying general kinematical symmetry group $G_{NC}$ of NCQM in its smallest dimension was understood only relatively recently as the triply extended groups of translations in $\mathbb{R}^4$ \cite{ncqmjpa}. One of us \cite{plethora} has explored exhaustively the different inequivalent representations of $G_{NC}$ and showed that they fall into three classes of 4-dimensional, 2-dimensional and 0-dimensional coadjoint orbits. It was observed that within the 4-dimensional coadjoint orbit, there exists families of representations labelled by three parameters of which one case (with two parameters tending to zero) corresponds to standard quantum mechanics. In what follows, we will limit our interest to the case of where all three parameters are nonzero and a certain constraint between these parameters hold.

Having restored the symmetry of the phase-space observables through their non-commutativities, the plausible natural setting for the full-fledged NCQM is the Wigner function representations. Wigner function representations \cite{Wigner} are known to be the quantum-mechanical analogue of the phase space distribution of classical statistical mechanics \cite{Moyal} that includes quantum mechanical corrections and hence serve many computational purposes in quantum settings. It is also known that the Wigner function can be related to variation of information measures and entropies (see \cite{OlivaresEtAl}). General features of time independent Wigner functions are studied in \cite{Curtrightetal}. The quasi-probabilistic nature of Wigner function may pose problems in interpretation but one can smoothen the function over phase space area elements to form the more well-behaved Husimi distribution (see \cite{HWLee}). Another problem discussed in constructing Wigner functions is its gauge-dependency when treating noncommutativity of momenta operators in the Landau system of (charged) particle in magnetic field. For the full NCQM, it is expected that more gauge dependency will appear with further noncommuting positions. The gauge dependencies of the Wigner functions are rendered harmless in the group-theoretic formulation since there will be unitary operators transforming from one equivalent gauged representation to another. In the case of NCQM, the NC Wigner functions have been calculated for the cases of the Landau and symmetric gauges in \cite{wigfuncpaper}. It is thus of interest to us to explore the NC Wigner functions in the general gauge setting and show explicitly a two-gauge parameter dependence, allowing access to more computational possibilities.

\section{On the Lie group $\g$, the kinematical symmetry group of NCQM}
\label{sec:Preli}
The group $\g$, being a 7-dimensional connected, simply connected nilpotent Lie group was first introduced in \cite{ncqmjmp}, the role of which in 2-dimensional NCQM runs parallel to that of the 5-dimensional Weyl-Heisenberg group for quantum mechanics in 2-dimensions. To have a brief comparative review of these two nilpotent Lie groups consult section II of \cite{plethora}. The group $\g$ is a seven parameter, real Lie group. We shall write a general element of the group as
\begin{equation}
  g = (\theta, \phi, \psi, \bq , \bp), \qquad \theta, \phi, \psi \in \mathbb R, \quad \bq = (q_1, q_2) \in \mathbb R^2, \quad \bp = (p_1, p_2) \in \mathbb R^2 ,
\label{group-elem}
\end{equation}
with the group multiplication given by
\begin{eqnarray}
\label{grplawtriplyextendedgrp}
\lefteqn{(\theta,\phi,\psi,\bq,\bp)(\theta^{\prime},\phi^{\prime},\psi^{\prime},
\bq^{\prime},\bp^{\prime})}\nonumber\\
&&=(\theta+\theta^{\prime}+\frac{\alpha}{2}[\langle\bq \cdot \bp^{\prime}\rangle-\langle
\bp \cdot\bq^{\prime}\rangle],\phi+\phi^{\prime}+\frac{\beta}{2}[\bp\wedge
\bp^{\prime}],\psi+\psi^{\prime}+\frac{\gamma}{2}[\bq\wedge\bq^{\prime}],
\nonumber\\
&&\;\;\;\; \bq+\bq^{\prime},\bp+\bp^{\prime}).
\end{eqnarray}
where, for two 2-vectors, $\ba = (a_1, a_2), \bb = (b_1, b_2 ), \; \ba \cdot \bb = a_1 b_1 + a_2 b_2$ and $\ba\wedge \bb = a_1 b_2 - a_2 b_1$. The quantities $\alpha, \beta$ and $\gamma$ are dimensional constants. The three parameters $\theta, \phi, \psi$ constitute the centre of the group, while the $\bq$ and $\bp$, are the parameters of $\mathbb R^4$, the three-fold central extension of which leads to $\g$. Denoting the centre by $\mathcal Z$, we see that $\g/\mathcal Z \simeq \mathbb R^4$. Note that if we denote the dimension of the position coordinate by $[q]$ and that of the momentum coordinate by $[p]$, then we immediately see that in order to have $\theta$, $\phi$ and $\psi$ to be all dimensionless, we must have $[\alpha]=\left[\frac{1}{pq}\right]$, $[\beta]=\left[\frac{1}{p^2}\right]$ and $[\gamma]=\left[\frac{1}{q^2}\right]$.

We know from \cite{ncqmjpa} that $\g$ admits coadjoint orbits of dimension 4, 2 and 0 which are completely determined by the real triple $(\rho,\sigma,\tau)$. It was also found there that the unitary dual of $\g$ is in 1-1 correspondence with its various coadjoint orbits and hence s labelled by the same triple $(\rho,\sigma,\tau)$. Respecting the notations used in \cite{ncqmjpa}, a generic coadjoint orbit of $\g$ will be denoted by $\mathcal{O}^{\rho,\sigma,\tau}$. In this paper, the 4 dimensional coadjoint orbits with nonzero $\rho$, $\sigma$ and $\tau$ satisfying $\rho^{2}\alpha^{2}-\sigma\beta\gamma\tau\neq 0$ will only concern us. Such orbits will be denoted as $\mathcal{O}^{\rho,\sigma,\tau}_{4}$.

\section{Gauge equivalence classes of unitary irreducible representations of noncommutative quantum mechanics}
\label{sec:gauge-class}
In \cite{plethora}, 2-parameter $(l,m)$ gauge equivalence classes of UIRs of $\g$ were computed. It was also shown there that they, in turn, give rise to vector potentials labeled by $m$. The 2-parameter family of unitarily equivalent irreducible representations (UIRs) of $\g$ was due to a fixed coadjoint orbit determined by $\rho=\sigma=\tau=1$ where the unitary dual of $\g$ is labeled by the triple $(\rho,\sigma,\tau)$. When one fixes this triple, by varying $(l,m)$, one can choose a representative of the underlying equivalence class of unitary irreducible representations of $\g$. The set of UIRs $U^{\rho,\sigma,\tau}_{l,m}$ for $\rho\neq 0$, $\sigma\neq 0$, $\tau\neq 0$ and $\rho^{2}-\gamma\beta\sigma\tau\neq 0$ with fixed $m\in\mathbb{R}$ and $l\in\mathbb{R}\smallsetminus\{\frac{\rho^{2}\alpha^{2}}{\gamma\beta\sigma\tau}\}$ will be denoted by $\mathfrak{N}$. These UIRs read as follows
\begin{eqnarray}\label{eq:generic-rep}
\lefteqn{(U^{\rho,\sigma,\tau}_{l,m}(\theta,\phi,\psi,\vec{q},\vec{p})f)(r_{1},r_{2})}\nonumber\\
&&=e^{-i\rho\theta-i\sigma\phi-i\tau\psi}e^{i\rho\alpha p_{1}r_{1}+i\rho\alpha p_{2}r_{2}+\frac{i\rho^{2}\alpha^{2}\tau\gamma(1-l)}{\tau\gamma\sigma\beta l-\rho^{2}\alpha^{2}}q_{1}r_{2}+il\tau\gamma q_{2}r_{1}+i\left[\frac{\rho\alpha}{2}+\frac{\rho\alpha\tau\gamma\sigma\beta m(1-l)}{\tau\gamma\sigma\beta l-\rho^{2}\alpha^{2}}\right]p_{1}q_{1}}\nonumber\\
&&\times e^{i\left[\frac{\rho\alpha}{2}-\frac{l\tau\gamma\sigma\beta(1-m)}{\rho\alpha}\right]p_{2}q_{2}+i(m-\frac{1}{2})\sigma\beta p_{1}p_{2}+i\left[\frac{\tau\gamma}{2}-\frac{\tau\gamma(1-l)(\tau\gamma\sigma\beta l-\tau\gamma\sigma\beta lm-\rho^{2}\alpha^{2})}{\tau\gamma\sigma\beta l-\rho^{2}\alpha^{2}}\right]q_{1}q_{2}}\nonumber\\
&&\scriptstyle\times f\left(r_{1}-\frac{(1-m)\sigma\beta}{\rho\alpha}p_{2}+\frac{\tau\gamma\sigma\beta(l+m-lm)-\rho^{2}\alpha^{2}}{\tau\gamma\sigma\beta l-\rho^{2}\alpha^{2}}q_{1},r_{2}+\frac{m\sigma\beta}{\rho\alpha}p_{1}-\frac{\tau\gamma\sigma\beta l(1-m)-\rho^{2}\alpha^{2}}{\rho^{2}\alpha^{2}}q_{2}\right),
\end{eqnarray}
where $f\in L^{2}(\mathbb{R}^{2},dr_{1}dr_{2})$.

The corresponding self adjoint representation of $\G$ acting on the smooth vectors of $L^{2}(\mathbb{R}^{2},dr_{1}dr_{2})$ is given by
\begin{equation}\label{gauge-equiv-reps-algbr}
\begin{split}
&\hat{Q}^{m}_{1}=r_{1}-m\frac{i\sigma\beta}{\rho^{2}\alpha^{2}}\frac{\partial}{\partial r_{2}},\\
&\hat{Q}^{m}_{2}=r_{2}+(1-m)\frac{i\sigma\beta}{\rho^{2}\alpha^{2}}\frac{\partial}{\partial r_{1}},\\
&\hat{P}^{l,m}_{1}=\frac{\tau\gamma\rho\alpha(1-l)}{\tau\gamma\sigma\beta l-\rho^{2}\alpha^{2}}r_{2}-\frac{i}{\rho\alpha}\left[\frac{\tau\gamma\sigma\beta(l+m-lm)-\rho^{2}\alpha^{2}}{\tau\gamma\sigma\beta l-\rho^{2}\alpha^{2}}\right]\frac{\partial}{\partial r_{1}},\\
&\hat{P}^{l,m}_{2}=\frac{l\tau\gamma}{\rho\alpha}r_{1}+i\left[\frac{\tau\gamma\sigma\beta l(1-m)-\rho^{2}\alpha^{2}}{\rho^{3}\alpha^{3}}\right]\frac{\partial}{\partial r_{2}}.
\end{split}
\end{equation}

It is interesting note that these operators are slightly different from the usual notion of Wigner operators\cite{Kubo} for which the polarized form of operators are in place as expected in a quantization scheme. Rearranging the terms in the last two equations of (\ref{gauge-equiv-reps-algbr}) using the first two then leads us to
\begin{equation}\label{gauge-equiv-reps-algbr-rarrngd}
\begin{split}
&\hat{Q}^{m}_{1}=r_{1}-m\frac{i\sigma\beta}{\rho^{2}\alpha^{2}}\frac{\partial}{\partial r_{2}},\\
&\hat{Q}^{m}_{2}=r_{2}+(1-m)\frac{i\sigma\beta}{\rho^{2}\alpha^{2}}\frac{\partial}{\partial r_{1}},\\
&\hat{P}^{l,m}_{1}=\frac{\tau\gamma\rho\alpha(1-l)}{\tau\gamma\sigma\beta l-\rho^{2}\alpha^{2}}\hat{Q}^{m}_{2}-\frac{i}{\rho\alpha}\left[\frac{\tau\gamma\sigma\beta(1-l)}{\tau\gamma\sigma\beta l-\rho^{2}\alpha^{2}}+1\right]\frac{\partial}{\partial r_{1}},\\
&\hat{P}^{l,m}_{2}=\frac{l\tau\gamma}{\rho\alpha}\hat{Q}^{m}_{1}-\frac{i}{\rho\alpha}\left(1-\frac{l\tau\gamma\sigma\beta}{\rho^{2}\alpha^{2}}\right)\frac{\partial}{\partial r_{2}}.
\end{split}
\end{equation}

One is then motivated by (\ref{gauge-equiv-reps-algbr-rarrngd}) to define the underlying vector potential $\bA^{\rho,\sigma,\tau}\equiv(A^{\rho,\sigma,\tau}_{1},A^{\rho,\sigma,\tau}_{2})$ for the 2-dimensional system of NCQM as

\begin{equation} \label{vector-pot-def}
\bA^{\rho,\sigma,\tau}\equiv\left(-\frac{\tau\gamma\rho\alpha(1-l)}{\tau\gamma\sigma\beta l-\rho^{2}\alpha^{2}}\hat{Q}^{m}_{2},-\frac{l\tau\gamma}{\rho\alpha}\hat{Q}^{m}_{1}\right). 
\end{equation}

Now, if the constant magnetic field  applied perpendicular to the underlying 2-dimensional system is denoted by $B$, then from the discussion conducted at p. 14 in \cite{ncqmjpa}, the magnetic field reads off as $B=-\frac{\tau\gamma}{\rho\alpha}$. It becomes evident then that, as in a quantum mechanical system in the presence of a vertical constant magnetic field $B$ (without spatial noncommutativity taken into consideration), the relation between the magnetic field and the underlying vector potential, i.e. $\partial_{1}A^{\rho,\sigma,\tau}_{2}-\partial_{2}A^{\rho,\sigma,\tau}_{1}=B$, no longer holds, where $A^{\rho,\sigma,\tau}_{i}$'s, for $i=1,2$, are the 2-components of the vector potential $\bA^{\rho,\sigma,\tau}$. The limiting expression of $\bA^{\rho,\sigma,\tau}$, as $\sigma\rightarrow 0$, denoted by $\bA^{\rho,\tau}\equiv(A^{\rho,\tau}_{1},A^{\rho,\tau}_{2})$ then is easily found to restore the relation $\partial_{1}A^{\rho,\tau}_{2}-\partial_{2}A^{\rho,\tau}_{1}=B$.

The well-known {\em symmetric gauge} representation of NCQM corresponds to the choice $l=\frac{\rho\alpha(\rho\alpha-\sqrt{\rho^{2}\alpha^{2}-\gamma\beta\sigma\tau})}{\gamma\beta\sigma\tau}:=l_{s}$ and $m=\frac{1}{2}$ in (\ref{gauge-equiv-reps-algbr}) as is given below:
\begin{equation}\label{sym-gauge-rep-algbr}
\begin{split}
&\hat{Q}^{\frac{1}{2}}_{1}=r_{1}-\frac{i\sigma\beta}{2\rho^{2}\alpha^{2}}\frac{\partial}{\partial r_{2}},\\
&\hat{Q}^{\frac{1}{2}}_{2}=r_{2}+\frac{i\sigma\beta}{2\rho^{2}\alpha^{2}}\frac{\partial}{\partial r_{1}},\\
&\hat{P}^{l_{s},\frac{1}{2}}_{1}=\frac{(\sqrt{\rho^{2}\alpha^{2}-\gamma\beta\sigma\tau}-\rho\alpha)}{\sigma\beta}r_{2}-\frac{i}{2\rho^{2}\alpha^{2}}(\rho\alpha+\sqrt{\rho^{2}\alpha^{2}-\gamma\beta\sigma\tau})\frac{\partial}{\partial r_{1}},\\
&\hat{P}^{l_{s},\frac{1}{2}}_{2}=\frac{(\rho\alpha-\sqrt{\rho^{2}\alpha^{2}-\gamma\beta\sigma\tau})}{\sigma\beta}r_{1}-\frac{i}{2\rho^{2}\alpha^{2}}(\rho\alpha+\sqrt{\rho^{2}\alpha^{2}-\gamma\beta\sigma\tau})\frac{\partial}{\partial r_{2}}.
\end{split}
\end{equation}
The corresponding vector potential $\bA^{\rho,\sigma,\tau}_{\hbox{\tiny{sym}}}$, from (\ref{vector-pot-def}), then reads off as
\begin{equation} \label{vector-pot-symmetric-gauge}
\bA^{\rho,\sigma,\tau}_{\hbox{\tiny{sym}}}\equiv\left(\frac{(\rho\alpha-\sqrt{\rho^{2}\alpha^{2}-\gamma\beta\sigma\tau})}{\sigma\beta}\hat{Q}^{\frac{1}{2}}_{2},\frac{(\sqrt{\rho^{2}\alpha^{2}-\gamma\beta\sigma\tau}-\rho\alpha)}{\sigma\beta}\hat{Q}^{\frac{1}{2}}_{1}\right). 
\end{equation}
Denote the 2-components of the vector potential $\bA^{\rho,\sigma,\tau}_{\hbox{\tiny{sym}}}$ by $A^{\rho,\sigma\tau}_{i,\hbox{\tiny{sym}}}$ with $i=1,2$ and observe that the following holds
\begin{equation}\label{def-position-nc-magnetic-field}
\partial_{1}A^{\rho,\sigma,\tau}_{2,\hbox{\tiny{sym}}}-\partial_{2}A^{\rho,\sigma,\tau}_{1,\hbox{\tiny{sym}}}=\frac{2\hbar}{\vartheta}\left(\sqrt{1-\frac{B\vartheta}{\hbar}}-1\right):=\bar{B},
\end{equation}
where we chose $B=-\frac{\tau\gamma}{\rho\alpha}$, $\vartheta=-\frac{\sigma\beta}{\rho^{2}\alpha^{2}}$ and $\hbar=\frac{1}{\rho\alpha}$ as in p. 14 of \cite{ncqmjpa}. Compare (\ref{def-position-nc-magnetic-field}) with (43) at p. 14 of \cite{DelducEtAl2008}.

Also, of considerable interest in Physics literature, is the {\em Landau gauge} representation of NCQM that corresponds to $l=1$, $m=0$ in (\ref{gauge-equiv-reps-algbr}). The self adjoint representation of $\G$ in the Landau gauge representation can be written as
\begin{equation}\label{Landau-gauge-rep-algbr}
\begin{split}
&\hat{Q}^{0}_{1}=r_{1},\\
&\hat{Q}^{0}_{2}=r_{2}+\frac{i\sigma\beta}{\rho^{2}\alpha^{2}}\frac{\partial}{\partial r_{1}},\\
&\hat{P}^{1,0}_{1}=-\frac{i}{\rho\alpha}\frac{\partial}{\partial r_{1}},\\
&\hat{P}^{1,0}_{2}=\frac{\tau\gamma}{\rho\alpha}r_{1}+\frac{i(\tau\gamma\sigma\beta-\rho^{2}\alpha^{2})}{\rho^{3}\alpha^{3}}\frac{\partial}{\partial r_{2}}.
\end{split}
\end{equation}

Substituting $l=1$, $m=0$ in (\ref{vector-pot-def}) then yields the Landau gauge vector potential 
\begin{equation} \label{vector-pot-Landau-gauge}
\bA^{\rho,\sigma,\tau}_{\hbox{\tiny{Landau}}}\equiv(0,-\frac{\tau\gamma}{\rho\alpha}\hat{Q}^{0}_{1})=(0,Br_{1}),
\end{equation}
recovering $\partial_{1}A^{\rho,\sigma,\tau}_{2,\hbox{\tiny{Landau}}}-\partial_{2}A^{\rho,\sigma,\tau}_{1,\hbox{\tiny{Landau}}}=B$ where $A^{\rho,\sigma,\tau}_{i,\hbox{\tiny{Landau}}}$'s with $i=1,2$, being the components of the vector potential $\bA^{\rho,\sigma,\tau}_{\hbox{\tiny{Landau}}}$. This is again in agreement with what Delduc et al. found in \cite{DelducEtAl2008} (see p. 15).

\section{Wigner functions for equivalence classes of UIRs of $\g$}
\label{sec:Wig-func}
In this section, we will be dealing with he construction of Wigner function associated with a 2-dimensional system of NCQM using the general method developed in \cite{plancherel}. The UIRs of $\g$ that we will be focussing on for this purpose are given by (\ref{eq:generic-rep}). For this purpose, we first need to compute the Plancherel measure that the unitary dual of $\g$ is equipped with. We will then employ this measure to compute the NCQM Wigner function (see (\ref{wig-fcn-def})) for these gauge equivalence classes of UIRs of $\g$.

\subsection{Plancherel measure associated with the sector $\mathfrak{N}$ of $\gd$}
\label{subsec:Plancherel}

Let us denote by $\mathfrak{H}^{\rho,\sigma,\tau}$, a copy of $L^{2}(\mathbb{R}^{2},dr_{1}dr_{2})$ and consider the Hilbert bundle based on the unitary dual $\gd$, a generic point of which has coordinates $(\rho,\sigma,\tau)$ and the fiber at each $(\rho,\sigma,\tau)$ being the Hilbert space of Hilbert-Schmidt operators on $\mathfrak{H}^{\rho,\sigma,\tau}$ denoted by $\mathcal{B}_{2}(\rho,\sigma,\tau)$. Now consider the measurable fields of operators $(\rho,\sigma,\tau)\longmapsto\mathcal{A}(\rho,\sigma,\tau)$ with each $\mathcal{A}(\rho,\sigma,\tau)\in\mathcal{B}_{2}(\rho,\sigma,\tau)$. The space of such measurable fields of operators is endowed with an inner product structure and is called a {\em direct integral Hilbert space}. Given two such fields $(\rho,\sigma,\tau)\longmapsto\mathcal{A}^{1}(\rho,\sigma,\tau)$ and $(\rho,\sigma,\tau)\longmapsto\mathcal{A}^{2}(\rho,\sigma,\tau)$, the inner product between them is given by
\begin{equation}\label{inner-prod-dir-Hilbertspace}
\langle\mathcal{A}^{1}\vert\mathcal{A}^{2}\rangle_{\mathcal{B}_{2}^{\oplus}}=\int_{\gd}\tr(\mathcal{A}^{1}(\rho,\sigma,\tau)^{*}\mathcal{A}^{2}(\rho,\sigma,\tau))d\nu_{\g}(\rho,\sigma,\tau),
\end{equation}
where $d\nu_{\g}$ is the well-known Plancherel measure that the unitary dual $\g$ can be endowed with and we denote the direct integral Hilbert space of measurable fields of Hilbert-Schmidt operators by $\mathcal{B}_{2}^{\oplus}=\int_{\gd}^{\oplus}\mathcal{B}_{2}(\rho,\sigma,\tau)d\nu_{\g}(\rho,\sigma,\tau)$ rendering the fact that the inner product of such Hilbert space is given by (\ref{inner-prod-dir-Hilbertspace}).

Now the Plancherel measure for $\gd$ can be computed using a general orthogonality relation (see, for example, \cite{plancherel}) given by
\begin{eqnarray}\label{eq:ortho cond}
\lefteqn{\int_{\g}\left[\int_{\gd}\tr(U^{\rho,\sigma,\tau}_{l,m}(g)^{*}A^{1}(\rho,\sigma,\tau)
C_{\rho,\sigma,\tau}^{-1})d\nu_{\g}(\rho,\sigma,\tau)\right.}\nonumber\\
&\left.\times\int_{\gd}\tr(U^{\rho^{\prime},\sigma^{\prime},\tau^{\prime}}_{l,m}(g)^{*}A^{2}
(\rho^{\prime},\sigma^{\prime},\tau^{\prime})C^{-1}_{\rho^{\prime},\sigma^{\prime},
\tau^{\prime}})d\nu_{\g}(\rho^{\prime},\sigma^{\prime},\tau^{\prime})\right]d\mu(g)=\langle A^{1}|A^{2}\rangle_{\mathcal{B}_{2}^{\oplus}}\nonumber\\
\end{eqnarray}
The group $\g$ is unimodular and $d\mu$ is the Haar measure on it; $C_{\rho,\sigma,\tau}$ is the {\em Duflo-Moore operator} \cite{Du-Mo}, for the representation $U^{\rho,\sigma,\tau}_{l,m}(g)$ given by (\ref{eq:generic-rep}).

Now for the measurable vector fields $(\rho,\sigma,\tau)\longmapsto\lambda_{\rho,\sigma,\tau}$ and $(\rho,\sigma,\tau)\longmapsto\chi_{\rho,\sigma,\tau}$ with the vectors $\lambda_{\rho,\sigma,\tau},\chi_{\rho,\sigma,\tau}\in L^{2}(\mathbb{R}^{2},dr_{1}dr_{2})$, one can choose the operator fields $\mathcal{A}^{1}=\mathcal{A}^{2}$ in (\ref{eq:ortho cond}) using the rank 1-operators $\mathcal{A}^{1}(\rho,\sigma,\tau)=\mathcal{A}^{2}(\rho,\sigma,\tau)=\vert\chi_{\rho,\sigma,\tau}\rangle\langle\lambda_{\rho,\sigma,\tau}\vert$. Then writing the Plancherel measure $d\nu_{\g}$ with the help of a suitable density $\kappa(\rho,\sigma,\tau)$ as $d\nu_{\g}(\rho,\sigma,\tau)=\kappa(\rho,\sigma,\tau)d\rho d\sigma d\tau$, one computes the right side of (\ref{eq:ortho cond}) restricted to the sector $\mathfrak{N}$ of $\gd$ using the inner product defined by (\ref{inner-prod-dir-Hilbertspace}).
\begin{eqnarray}\label{eq:ortho-con-presnt-secnario}
\langle A^{1}|A^{1}\rangle_{\mathcal{B}_{2}^{\oplus}}&=&\int_{\mathfrak N}\tr(A^{1}(\rho,\sigma,\tau)^{*}A^{1}(\rho,\sigma,\tau))\kappa(\rho,\sigma,\tau)d\rho d\sigma d\tau\nonumber\\
&=&\int_{\mathfrak N}\tr(|\lambda_{\rho,\sigma,\tau}\rangle\langle\chi_{\rho,\sigma,\tau}|\chi_{\rho,\sigma,\tau}\rangle\langle\lambda_{\rho,\sigma,\tau}|)\kappa(\rho,\sigma,\tau)d\rho d\sigma d\tau\nonumber\\
&=& \int_{\mathfrak N}\|\chi_{\rho,\sigma,\tau}\|^{2}\|\lambda_{\rho,\sigma,\tau}\|^{2}\kappa(\rho,\sigma,\tau)d\rho d\sigma d\tau.
\end{eqnarray}

In \cite{wigfuncpaper}, the Plancherel measure of the unitary dual $\gd$ restricted to the sector with $\rho\neq 0$, $\sigma\neq 0$, $\tau\neq 0$ and $\rho^{2}\alpha^{2}-\gamma\beta\sigma\tau\neq 0$ has already been computed (see proposition 1, p. 4). There, the unitary irreducible representation of $\g$, i.e. the representative chosen for each distinct value of the triple $(\rho,\sigma,\tau)$ was the one associated with $l=1$ and $m=1$ in (\ref{eq:generic-rep}). But the Plancherel measure of the unitary dual $\gd$ restricted to the sector $\rho\neq 0$, $\sigma\neq 0$, $\tau\neq 0$ and $\rho^{2}\alpha^{2}-\gamma\beta\sigma\tau\neq 0$ should be independent of the choice of the representative for each $(\rho,\sigma,\tau)$ as has been verified in the following proposition:
\begin{proposition}\label{prop:Plancherel}
If one considers the unitary dual $\gd$ restricted to the sector $\rho\neq 0$, $\sigma\neq 0$, $\tau\neq 0$ and $\rho^{2}\alpha^{2}-\gamma\beta\sigma\tau\neq 0$ where each equivalence class of UIRs of $\g$ is represented by the representation (\ref{eq:generic-rep}) for a fixed ordered pair $(l,m)$, then the Plancherel measure of $\gd$ restricted to such family of UIRs of $\g$ is given by
\begin{equation}\label{eq:Plancherel-prop}
d\nu_{\g}(\rho,\sigma,\tau)=\frac{\vert\rho^{2}\alpha^{2}-\gamma\beta\sigma\tau\vert}{\alpha^{2}}d\rho d\sigma d\tau,
\end{equation}
and the corresponding Duflo-Moore operator reads
\begin{equation}\label{eq:duflo-moore}
C_{\rho,\sigma,\tau}=(2\pi)^{\frac{5}{2}}\mathbb{I},
\end{equation}
where $\mathbb{I}$ is the identity operator on $L^{2}(\mathbb{R}^{2},dr_{1}dr_{2})$.
\end{proposition}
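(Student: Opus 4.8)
The plan is to verify the proposition by a direct computation of the left-hand side of the orthogonality relation (\ref{eq:ortho cond}) with the ansatz $d\nu_{\g}(\rho,\sigma,\tau)=\kappa(\rho,\sigma,\tau)\,d\rho\,d\sigma\,d\tau$ and $C_{\rho,\sigma,\tau}=c\,\mathbb{I}$ for an undetermined positive constant $c$, and then matching against the right-hand side (\ref{eq:ortho-con-presnt-secnario}). Concretely, I would pick the rank-one operator fields $\mathcal{A}^{1}(\rho,\sigma,\tau)=\mathcal{A}^{2}(\rho,\sigma,\tau)=\vert\chi_{\rho,\sigma,\tau}\rangle\langle\lambda_{\rho,\sigma,\tau}\vert$, so that $\tr\!\big(U^{\rho,\sigma,\tau}_{l,m}(g)^{*}\mathcal{A}^{1}(\rho,\sigma,\tau)C_{\rho,\sigma,\tau}^{-1}\big)=c^{-1}\langle\lambda_{\rho,\sigma,\tau}\vert U^{\rho,\sigma,\tau}_{l,m}(g)^{*}\vert\chi_{\rho,\sigma,\tau}\rangle$, an ordinary matrix coefficient. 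The inner $\gd$-integrals in (\ref{eq:ortho cond}) then become $\int_{\mathfrak N}c^{-1}\overline{\langle\chi_{\rho,\sigma,\tau}\vert U^{\rho,\sigma,\tau}_{l,m}(g)\vert\lambda_{\rho,\sigma,\tau}\rangle}\,\kappa(\rho,\sigma,\tau)\,d\rho\,d\sigma\,d\tau$, and the outer Haar integral over $g=(\theta,\phi,\psi,\bq,\bp)$ factors into the central part and the $\mathbb{R}^{4}$ part.

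First I would carry out the $\theta,\phi,\psi$ integrations: from (\ref{eq:generic-rep}) the central variables enter only through the phase $e^{-i\rho\theta-i\sigma\phi-i\tau\psi}$, so integrating over $\theta,\phi,\psi\in\mathbb{R}$ against the two representation factors (one at $(\rho,\sigma,\tau)$, one at $(\rho',\sigma',\tau')$) produces $(2\pi)^{3}\delta(\rho-\rho')\delta(\sigma-\sigma')\delta(\tau-\tau')$, which collapses the double $\gd$-integral to a single one. Next I would handle the $\bq,\bp\in\mathbb{R}^{4}$ integrations. Writing out $\langle\chi\vert U^{\rho,\sigma,\tau}_{l,m}(0,0,0,\bq,\bp)\vert\lambda\rangle$ from (\ref{eq:generic-rep}) as an integral over $r_1,r_2$, the $q_1,q_2$ dependence sits in linear-exponential phases in $r_1,r_2$ together with a shift of the argument of $f$; integrating $\int dq_1\,dq_2$ yields delta functions that, after the change of variables induced by the shift, force the two $r$-integration variables (from the ket side and the bra side) to coincide. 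The remaining $\int dp_1\,dp_2$ then produces another pair of delta functions pinning the leftover $r$ variables, and the Jacobian of all these linear substitutions is where the factor $\vert\rho^{2}\alpha^{2}-\gamma\beta\sigma\tau\vert/\alpha^{2}$ must appear — combined with the powers of $2\pi$ from the seven one-dimensional Fourier-type integrations this fixes $\kappa(\rho,\sigma,\tau)$ and $c$. Collecting everything, the left side becomes $c^{-2}\,(2\pi)^{N}\,J(\rho,\sigma,\tau)^{-1}\int_{\mathfrak N}\Vert\chi_{\rho,\sigma,\tau}\Vert^{2}\Vert\lambda_{\rho,\sigma,\tau}\Vert^{2}\,\kappa(\rho,\sigma,\tau)\,d\rho\,d\sigma\,d\tau$, where $J$ is the Jacobian and $N$ the total power of $2\pi$; equating with (\ref{eq:ortho-con-presnt-secnario}) gives $\kappa=c^{2}(2\pi)^{-N}J$, and independence of the final answer from $(l,m)$ is the content to be checked.

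The main obstacle I expect is the bookkeeping in the $\bq,\bp$ integration: the representation (\ref{eq:generic-rep}) is an unwieldy product of exponentials whose arguments are $(l,m)$-dependent quadratic forms in $(\bq,\bp)$ together with an $(l,m)$-dependent affine shift of $f$'s argument, so one must be careful that all the cross terms in $q_i q_j$, $p_i p_j$, $p_i q_j$ conspire correctly after the substitution, and that the net Jacobian of the two successive linear changes of variables is exactly $\vert\tau\gamma\sigma\beta l-\rho^{2}\alpha^{2}\vert$ times elementary factors, independent of $l$ and $m$. A clean way to sidestep most of this is to invoke the known result: by Proposition~1 of \cite{wigfuncpaper} the Plancherel density for the representative $(l,m)=(1,1)$ is already $\vert\rho^{2}\alpha^{2}-\gamma\beta\sigma\tau\vert/\alpha^{2}$, and since the Plancherel measure is an intrinsic invariant of $\gd$ — it does not depend on which representative UIR is chosen within each gauge equivalence class, because equivalent representations have the same matrix-coefficient $L^{2}$-norms up to the unitary intertwiner, which is absorbed into the trace — the density is unchanged for every admissible $(l,m)$; one then only needs to recompute the Duflo–Moore operator $C_{\rho,\sigma,\tau}$ for general $(l,m)$, which by Schur's lemma for the square-integrable-modulo-centre representation must be a positive scalar, and a single normalization integral (e.g. taking $\lambda=\chi$ a Gaussian) pins it to $(2\pi)^{5/2}\mathbb{I}$.
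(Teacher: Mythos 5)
Your primary plan---rank-one operator fields, a scalar ansatz for the Duflo--Moore operator, integration over the central variables to produce $(2\pi)^3$ times delta functions in $(\rho,\sigma,\tau)$, and then successive linear changes of variables in the $(\bq,\bp)$-integration whose Jacobians supply the density---is essentially the paper's own proof; the paper executes exactly this computation, introducing variables $\Pi_1,\Pi_2$ (absorbing $p_1,p_2$ together with the $(l,m)$-dependent shifts) so that the remaining $q$-integrations yield $\delta(r_1-r_1')\delta(r_2-r_2')$, and then a second, dimensionally corrected substitution $\widetilde\Pi_i$ that turns the residual integral into $\Vert\chi\Vert^2\Vert\lambda\Vert^2$ with net Jacobian $\alpha^2/\vert\rho^2\alpha^2-\gamma\beta\sigma\tau\vert$ and total factor $(2\pi)^5/N^2$. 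Your proposed shortcut in the last paragraph is a genuinely different and legitimate route: the Plancherel measure lives on the unitary dual, so it cannot depend on the choice of representative within each equivalence class, and since the Duflo--Moore operator of a unimodular group is a scalar it is likewise preserved under conjugation by the measurable field of intertwiners relating the $(l,m)$ representatives to the $(1,1)$ ones already treated in Proposition~1 of the cited paper --- in fact this makes even your ``single normalization integral'' for $C_{\rho,\sigma,\tau}$ redundant. Interestingly, the paper states this invariance principle informally just before the proposition (``should be independent of the choice of the representative'') but then opts to \emph{verify} it by the brute-force computation rather than \emph{invoke} it; your shortcut buys brevity and conceptual clarity, while the paper's computation buys an independent consistency check and exhibits explicitly how the $(l,m)$-dependence cancels in the Jacobians. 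One small caution on your direct route: the single matching equation $\kappa=c^{2}(2\pi)^{-N}J$ determines only the product normalization of $(\kappa,c)$, so to land on precisely $(2\pi)^{5/2}$ and $\vert\rho^2\alpha^2-\gamma\beta\sigma\tau\vert/\alpha^2$ you must adopt the same normalization convention as the paper (equivalently, the one fixed by the cited $(1,1)$ computation).
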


\begin{proof}
Given the fact that $\g$ is unimodular, the underlying Duflo-Moore operator reads $C_{\rho,\sigma,\tau}=N\mathbb{I}$, where $N$ is a real number and $\mathbb{I}$ is the identity operator on $L^{2}(\mathbb{R}^{2},dr_{1}dr_{2})$. Now the left side of (\ref{eq:ortho cond}) can be read off as
\begin{eqnarray}\label{eq:planchrl-dervtn}
\lefteqn{\frac{1}{N^2}\int_{\mathbb{R}^7}\left[\int_{\mathbb{R}^{*}\times\mathbb{R}^{*}\times\mathbb{R}^{*}}\overline{\langle\chi_{\rho,\sigma,\tau}|U^{\rho,\sigma,\tau}_{l,m}(\theta,\phi,\psi,\bq,\bp)\lambda_{\rho,\sigma,\tau}\rangle}\kappa(\rho,\sigma,\tau)d\rho d\sigma d\tau\right.}\nonumber\\
&&\scriptstyle\left.\times\int_{\mathbb{R}^{*}\times\mathbb{R}^{*}\times\mathbb{R}^{*}}\langle\chi_{\rho^{\prime},\sigma^{\prime},\tau^{\prime}}|U^{\rho^{\prime},\sigma^{\prime},\tau^{\prime}}_{l,m}(\theta,\phi,\psi,\bq,\bp)\lambda_{\rho^{\prime},\sigma^{\prime},\tau^{\prime}}\rangle\kappa(\rho^{\prime},\sigma^{\prime},\tau^{\prime})d\rho^{\prime}d\sigma^{\prime}d\tau^{\prime}\right] d\theta\; d\phi\; d\psi\; d\bq\; d\bp\nonumber\\
=\lefteqn{\textstyle\frac{1}{N^2}\int_{(\rho,\sigma,\tau)}\int_{(\rho^{\prime},\sigma^{\prime},\tau^{\prime})}\left[\int_{\mathbb{R}^{7}}\left\{\int_{(r_{1},r_{2})\in\mathbb{R}^{2}}\int_{(r_{1}^{\prime},r_{2}^{\prime})\in\mathbb{R}^{2}}e^{i(\rho-\rho^{\prime})\theta+i(\sigma-\sigma^{\prime})\phi+i(\tau-\tau^{\prime})\psi}\right.\right.}\nonumber\\
&&\left.\left.\times e^{-i\alpha p_{1}(\rho r_{1}-\rho^{\prime}r_{1}^{\prime})-i\alpha p_{2}(\rho r_{2}-\rho^{\prime}r_{2}^{\prime})-i\alpha^{2}\gamma(1-l)q_{1}\left(\frac{\rho^{2}\tau r_{2}}{\tau\gamma\sigma\beta l-\rho^{2}\alpha^{2}}-\frac{\rho^{\prime 2}\tau^{\prime}r^{\prime}_{2}}{\tau^{\prime}\gamma\sigma^{\prime}\beta l-\rho^{\prime 2}\alpha^{2}}\right)}\right.\right.\nonumber\\
&&\left.\left.\times e^{-il\gamma q_{2}(\tau r_{1}-\tau^{\prime}r_{1}^{\prime})-i\left[\frac{(\rho-\rho^{\prime})\alpha}{2}+\alpha\beta\gamma m(1-l)\left(\frac{\rho\sigma\tau}{\tau\gamma\sigma\beta l-\rho^{2}\alpha^{2}}-\frac{\rho^{\prime}\sigma^{\prime}\tau^{\prime}}{\tau^{\prime}\gamma\sigma^{\prime}\beta l-\rho^{\prime 2}\alpha^{2}}\right)\right]p_{1}q_{1}}\right.\right.\nonumber\\
&&\left.\left.\times e^{-i\left[\frac{(\rho-\rho^{\prime})\alpha}{2}-\frac{l\gamma\beta(1-m)}{\alpha}\left(\frac{\tau\sigma}{\rho}-\frac{\tau^{\prime}\sigma^{\prime}}{\rho^{\prime}}\right)\right]p_{2}q_{2}-i\left(m-\frac{1}{2}\right)\beta(\sigma-\sigma^{\prime})p_{1}p_{2}}\right.\right.\nonumber\\
&&\left.\left.\times e^{i\left[\frac{(\tau-\tau^{\prime})\gamma}{2}-\gamma(1-l)\left\{\frac{(\tau^{2}\gamma\sigma\beta l-\tau^{2}\gamma\sigma\beta lm-\tau\rho^{2}\alpha^{2})}{\tau\gamma\sigma\beta l-\rho^{2}\alpha^{2}}-\frac{(\tau^{\prime 2}\gamma\sigma^{\prime}\beta l-\tau^{\prime 2}\gamma\sigma^{\prime}\beta lm-\tau^{\prime 2}\rho^{\prime 2}\alpha^{2})}{\tau^{\prime}\gamma\sigma^{\prime}\beta l-\rho^{\prime 2}\alpha^{2}}\right\}\right]}\right.\right.\nonumber\\
&&\scriptscriptstyle\left.\left.\times\chi_{\rho,\sigma,\tau}(r_{1},r_{2})\overline{\lambda_{\rho,\sigma,\tau}\left(r_{1}-\frac{[1-m]\sigma\beta}{\rho\alpha}p_{2}+\frac{[\tau\gamma\sigma\beta(l+m-lm)-\rho^{2}\alpha^{2}]}{\tau\gamma\sigma\beta l-\rho^{2}\alpha^{2}}q_{1}, r_{2}+\frac{m\sigma\beta}{\rho\alpha}p_{1}-\frac{[\tau\gamma\sigma\beta l(1-m)-\rho^{2}\alpha^{2}]}{\rho^{2}\alpha^{2}}q_{2}\right)}\right.\right.\nonumber\\
&&\scriptscriptstyle\left.\left.\times\overline{\chi_{\rho^{\prime},\sigma^{\prime},\tau^{\prime}}(r^{\prime}_{1},r^{\prime}_{2})}\lambda_{\rho^{\prime},\sigma^{\prime},\tau^{\prime}}\left(r^{\prime}_{1}-\frac{[1-m]\sigma^{\prime}\beta}{\rho^{\prime}\alpha}p_{2}+\frac{\tau^{\prime}\gamma\sigma^{\prime}\beta(l+m-lm)-\rho^{\prime 2}\alpha^{2}}{\tau^{\prime}\gamma\sigma^{\prime}\beta l-\rho^{\prime 2}\alpha^{2}}q_{1}, r^{\prime}_{2}+\frac{m\sigma^{\prime}\beta}{\rho^{\prime}\alpha}p_{1}-\frac{[\tau^{\prime}\gamma\sigma^{\prime}\beta l(1-m)-\rho^{\prime 2}\alpha^{2}]}{\rho^{\prime 2}\alpha^{2}}q_{2}\right)\right.\right.\nonumber\\
&&\scriptstyle\left.\left.\times dr_{1}dr_{2}dr^{\prime}_{1}dr^{\prime}_{2}\right\} d\theta d\phi d\psi dq_{1} dq_{2} dp_{1} dp_{2}\right]\kappa(\rho,\sigma,\tau)\kappa(\rho^{\prime},\sigma^{\prime},\tau^{\prime})d\rho d\sigma d\tau d\rho^{\prime}d\sigma^{\prime} d\tau^{\prime}\nonumber\\
=\lefteqn{\textstyle\frac{(2\pi)^{3}}{N^2}\int_{(\rho,\sigma,\tau)}\left[\int_{\mathbb{R}^{4}}\left\{\int_{(r_{1},r_{2})\in\mathbb{R}^{2}}\int_{(r_{1}^{\prime},r_{2}^{\prime})\in\mathbb{R}^{2}}e^{i\alpha\rho p_{1}(r_{1}-r_{1}^{\prime})+i\alpha\rho p_{2}(r_{2}-r_{2}^{\prime})}\right.\right.}\nonumber\\
&&\scriptstyle\times\left.\left. e^{-\frac{i\rho^{2}\alpha^{2}\gamma\tau(1-l)q_{1}}{\tau\gamma\sigma\beta l-\rho^{2}\alpha^{2}}(r_{2}-r_{2}^{\prime})-il\gamma\tau q_{2}(r_{1}-r_{1}^{\prime})}\chi_{\rho,\sigma,\tau}(r_{1},r_{2})\right.\right.\nonumber\\
&&\scriptscriptstyle\left.\left.\times\overline{\lambda_{\rho,\sigma,\tau}\left(r_{1}-\frac{[1-m]\sigma\beta}{\rho\alpha}p_{2}+\frac{[\tau\gamma\sigma\beta(l+m-lm)-\rho^{2}\alpha^{2}]}{\tau\gamma\sigma\beta l-\rho^{2}\alpha^{2}}q_{1}, r_{2}+\frac{m\sigma\beta}{\rho\alpha}p_{1}-\frac{[\tau\gamma\sigma\beta l(1-m)-\rho^{2}\alpha^{2}]}{\rho^{2}\alpha^{2}}q_{2}\right)}\right.\right.\nonumber\\
&&\scriptscriptstyle\left.\left.\times\overline{\chi_{\rho,\sigma,\tau}(r^{\prime}_{1},r^{\prime}_{2})}\lambda_{\rho,\sigma,\tau}\left(r^{\prime}_{1}-\frac{[1-m]\sigma\beta}{\rho\alpha}p_{2}+\frac{\tau\gamma\sigma\beta(l+m-lm)-\rho^{\prime 2}\alpha^{2}}{\tau\gamma\sigma\beta l-\rho^{2}\alpha^{2}}q_{1}, r^{\prime}_{2}+\frac{m\sigma\beta}{\rho\alpha}p_{1}-\frac{[\tau\gamma\sigma\beta l(1-m)-\rho^{2}\alpha^{2}]}{\rho^{2}\alpha^{2}}q_{2}\right)\right.\right.\nonumber\\
&&\scriptstyle\left.\left.\times dr_{1}dr_{2}dr^{\prime}_{1}dr^{\prime}_{2}\right\}dq_{1} dq_{2} dp_{1} dp_{2}\right][\kappa(\rho,\sigma,\tau)]^{2}d\rho d\sigma d\tau\nonumber\\
=\lefteqn{\frac{(2\pi)^{3}}{N^2}\int_{(\rho,\sigma,\tau)}\left[\int_{\mathbb{R}^{4}}\left\{\int_{(r_{1},r_{2})\in\mathbb{R}^{2}}\int_{(r_{1}^{\prime},r_{2}^{\prime})\in\mathbb{R}^{2}}e^{-i\rho\alpha\left[\frac{\rho\alpha(\tau\gamma\sigma\beta-\rho^{2}\alpha^{2})q_{1}}{\sigma\beta(\tau\gamma\sigma\beta l-\rho^{2}\alpha^{2})(1-m)}+\frac{\Pi_{2}}{1-m}\right](r_{2}-r_{2}^{\prime})}\right.\right.}\nonumber\\
&&\left.\left.\times e^{-i\rho\alpha\left[\frac{(\tau\gamma\sigma\beta l-\rho^{2}\alpha^{2})q_{2}}{m\rho\alpha\sigma\beta}+\frac{\Pi_{1}}{m}\right](r_{1}-r_{1}^{\prime})}\chi_{\rho,\sigma,\tau}(r_{1},r_{2})\overline{\lambda_{\rho,\sigma,\tau}\left(r_{1}-\frac{\sigma\beta}{\rho\alpha}\Pi_{2},r_{2}+\frac{\sigma\beta}{\rho\alpha}\Pi_{1}\right)}\right.\right.\nonumber\\
&&\left.\left.\times\overline{\chi_{\rho,\sigma,\tau}(r_{1}^{\prime},r_{2}^{\prime})}\lambda_{\rho,\sigma,\tau}\left(r_{1}^{\prime}-\frac{\sigma\beta}{\rho\alpha}\Pi_{2},r_{2}^{\prime}+\frac{\sigma\beta}{\rho\alpha}\Pi_{1}\right)dr_{1}dr_{2}dr_{1}^{\prime}dr_{2}^{\prime}\right\}\right.\nonumber\\
&&\left.\times\frac{1}{|m(1-m)|}d\Pi_{1}d\Pi_{2}dq_{1}dq_{2}\right][\kappa(\rho,\sigma,\tau)]^{2}d\rho d\sigma d\tau, 
\end{eqnarray}
where we have made the following changes of variables:
\begin{equation}
\begin{split}
\frac{m\sigma\beta}{\rho\alpha}p_{1}-\frac{[\tau\gamma\sigma\beta l(1-m)-\rho^{2}\alpha^{2}]}{\rho^{2}\alpha^{2}}q_{2}&=\frac{\sigma\beta}{\rho\alpha}\Pi_{1}\\
\frac{(1-m)\sigma\beta}{\rho\alpha}p_{2}-\frac{[\tau\gamma\sigma\beta(l+m-lm)-\rho^{2}\alpha^{2}]}{\tau\gamma\sigma\beta l-\rho^{2}\alpha^{2}}q_{1}&=\frac{\sigma\beta}{\rho\alpha}\Pi_{2}.
\end{split}
\end{equation}
Now (\ref{eq:planchrl-dervtn}) reduces to
\begin{eqnarray}\label{Plancrl-dervtn-continued}
\lefteqn{\frac{(2\pi)^{5}}{N^2}\int_{(\rho,\sigma,\tau)}\frac{|\sigma\beta\gamma(\tau\gamma\sigma\beta l-\rho^{2}\alpha^{2})(1-m)|}{\rho^{2}\alpha^{2}|\tau\gamma\sigma\beta-\rho^{2}\alpha^{2}|}\times\frac{|m\sigma\beta\gamma|}{|\tau\gamma\sigma\beta l-\rho^{2}\alpha^{2}|}\times\frac{1}{|m(1-m)|}}\nonumber\\
&&\times\left[\int_{(\Pi_{1},\Pi_{2})\in\mathbb{R}^{2}}\left\{\int_{(r_{1},r_{2})\in\mathbb{R}^{2}}\chi_{\rho,\sigma,\tau}(r_{1},r_{2})\overline{\lambda_{\rho,\sigma,\tau}\left(r_{1}-\frac{\sigma\beta}{\rho\alpha}\Pi_{2},r_{2}+\frac{\sigma\beta}{\rho\alpha}\Pi_{1}\right)}\right.\right.\nonumber\\
&&\left.\left.\times\overline{\chi_{\rho,\sigma,\tau}(r_{1},r_{2})}\lambda_{\rho,\sigma,\tau}\left(r_{1}-\frac{\sigma\beta}{\rho\alpha}\Pi_{2},r_{2}+\frac{\sigma\beta}{\rho\alpha}\Pi_{1}\right)dr_{1}dr_{2}\right\}d\Pi_{1}d\Pi_{2}\right][\kappa(\rho,\sigma,\tau)]^{2}d\rho d\sigma d\tau\nonumber\\
=\lefteqn{\frac{(2\pi)^{5}}{N^2}\int_{(\rho,\sigma,\tau)}\frac{\sigma^{2}\beta^{2}\gamma^{2}}{\rho^{2}\alpha^{2}|\tau\gamma\sigma\beta-\rho^{2}\alpha^{2}|}\left[\int_{(r_{1},r_{2})\in\mathbb{R}^{2}}\chi_{\rho,\sigma,\tau}(r_{1},r_{2})\overline{\chi_{\rho,\sigma,\tau}(r_{1},r_{2})}\right.}\nonumber\\
&&\left.\times\left\{\int_{\left(\widetilde{\Pi}_{2},\widetilde{\Pi}_{1}\right)\in\mathbb{R}^{2}}\overline{\lambda_{\rho,\sigma,\tau}(\widetilde{\Pi}_{2},\widetilde{\Pi}_{1})}\lambda_{\rho,\sigma,\tau}(\widetilde{\Pi}_{2},\widetilde{\Pi}_{1})\frac{\rho^{2}\alpha^{4}}{\sigma^{2}\beta^{2}\gamma^{2}}d\widetilde{\Pi}_{1}d\widetilde{\Pi}_{2}\right\}dr_{1}dr_{2}\right]\nonumber\\
&&\times[\kappa(\rho,\sigma,\tau)]^{2}d\rho d\sigma d\tau,
\end{eqnarray}
where, in (\ref{Plancrl-dervtn-continued}), we have introduced the following change of variables:
\begin{equation}\label{second-change-vrble}
\begin{split}
r_{1}-\frac{\sigma\beta}{\rho\alpha}\Pi_{2}&=\frac{\alpha}{\gamma}\widetilde{\Pi}_{2}\\
r_{2}-\frac{\sigma\beta}{\rho\alpha}\Pi_{1}&=\frac{\alpha}{\gamma}\widetilde{\Pi}_{1},
\end{split}
\end{equation}
so that (\ref{Plancrl-dervtn-continued}) now takes the following simple form:
\begin{equation}\label{final-exprssn}
\frac{(2\pi)^{5}}{N^2}\int_{(\rho,\sigma,\tau)}\frac{\alpha^{2}}{|\tau\gamma\sigma\beta-\rho^{2}\alpha^{2}|}||\chi_{\rho,\sigma,\tau}||^{2}||\lambda_{\rho,\sigma,\tau}||^{2}[\kappa(\rho,\sigma,\tau)]^{2}d\rho d\sigma d\tau.
\end{equation}

Now comparing (\ref{final-exprssn}) with the right side of (\ref{eq:ortho-con-presnt-secnario}) then yields
\begin{equation}
\begin{split}
N&=(2\pi)^{\frac{5}{2}}\\
\kappa(\rho,\sigma,\tau)&=\frac{|\tau\gamma\sigma\beta-\rho^{2}\alpha^{2}|}{\alpha^{2}},
\end{split}
\end{equation}
proving the proposition.
\end{proof}

\begin{remark}
A few remarks on the statement of the proposition \ref{prop:Plancherel} and its proof are in order. In this paper, we deal with an arbitrary member of each family of equivalence classes of the sector $\mathfrak{N}$ of the unitary dual $\gd$ labeled by the ordered pair $(l,m)$. The underlying proposition states that for any choice of the representative (determined by $(l,m)$) from each equivalence class restricted to the sector $\mathfrak{N}$ of $\gd$, we obtain the given Plancherel measure.  It is to be noted that during the proof, one has to exercise caution while introducing change of variables (see (\ref{second-change-vrble})) where on the left side one has quantities with dimension of length. So, to bring the dimension of momentum , i.e. the one associated with the $\widetilde{\Pi}_{i}$'s, for $i=1,2$ on the right side of (\ref{second-change-vrble}), to that of length, we had to insert a factor of $\frac{\alpha}{\gamma}$ which indeed has the dimension of Length/Momentum. These careful change of variables yield the desired Plancherel measure.
\end{remark}

\subsection{Construction of Wigner function}
\label{subsec:wigner-func-constrctn}

Recall from \cite{wigfuncpaper} that the Lebesgue measure $dX^{*}$ on the dual Lie algebra $\G^{*}$ decomposes as
\begin{equation}
   dX^{*} = s_{\rho, \sigma, \tau} (X^*_{\rho, \sigma, \tau }) \; d\nu_{\g}(\rho,\sigma,\tau)\; d\Omega_{\rho, \sigma, \tau}(X^*_{\rho, \sigma, \tau }), \qquad X^*_{\rho, \sigma, \tau } \in \mathcal O^{\rho, \sigma, \tau}
\label{dual-meas-split}
\end{equation}
where $s_{\rho, \sigma, \tau}$ is  a positive density, $d\Omega_{\rho, \sigma, \tau}$ is the canonical invariant measure on the coadjoint orbit $\mathcal O^{\rho, \sigma, \tau}$ (in this case, just the Lebesgue measure on $\mathbb R^4$) and  $d\nu_{\g}$ is the Plancherel measure (\ref{eq:Plancherel-prop}). By $X^{*}$ and $X_{\rho,\sigma,\tau}^{*}$, we denote a generic point in $\G^{*}$ and in the 4-dimensional coadjoint orbit $\mathcal{O}^{\rho,\sigma,\tau}_{4}\in\G^{*}$, respectively. It then immediately follows that the strictly positive density $s_{\rho,\sigma,\tau}(X^{*}_{\rho,\sigma,\tau})$ is given by
\begin{equation}\label{eq:density}
s_{\rho,\sigma,\tau}(X^{*}_{\rho,\sigma,\tau})=\frac{\alpha^{2}}{\vert\rho^{2}\alpha^{2}-\gamma\beta\sigma\tau\vert},
\end{equation}
which upon substitution in the definition (see p. 6 of \cite{wigfuncpaper}) of the NCQM Wigner function, i.e.
\begin{eqnarray}
W(A;\; X^*_{\rho, \sigma, \tau })&=& \frac {[s_{\rho, \sigma, \tau} (X^*_{\rho, \sigma, \tau })]^{\frac 12}}{(2\pi)^\frac 72}\;\int_{\G}  e^{-i\langle X^*_{\rho, \sigma, \tau}; X\rangle}\nonumber\\
   &\times& \left[  \int_{\mathfrak N} \text{Tr}[ U^{\omega,\nu,\mu}_{l,m}(e^{-X}) A(\omega, \nu, \mu) C^{-1}_{\omega , \nu , \mu}]\; d\nu_{\g}(\omega, \nu, \mu)\right]dX, \;\;
\label{wig-fcn-def}
\end{eqnarray}
leads us to the following expression:
\begin{eqnarray}
 W(A;\; X^*_{\rho, \sigma, \tau }) &=&   \frac {\vert\alpha\vert}{(2\pi)^6 \;
 \vert\rho^{2}\alpha^{2}-\gamma\beta\sigma\tau\vert^{\frac 12}} \;\int_{\G}  e^{-i\langle X^*_{\rho, \sigma, \tau}; X\rangle}\nonumber\\
   &\times& \left[  \int_{\mathfrak N} \text{Tr}[ U^{\omega,\nu,\mu}_{l,m}(e^{-X}) A(\omega, \nu, \mu) ]\; d\nu_{\g}(\omega, \nu, \mu)\right]dX. \;\;
\label{wig-fcn-def2}
\end{eqnarray}

It has been argued in \cite{plancherel,Fuhrbook} that for nilpotent Lie group things become more tractable in the sense that the underlying Wigner functions are decomposable, meaning that the Wigner function only picks up the contribution of the Hilbert space of Hilbert-Schmidt operators associated with the underlying coadjoint orbit for the measurable operator fields $(\rho, \sigma, \tau ) \longmapsto A(\rho, \sigma, \tau ) \in \mathcal{B}_{2} (\rho, \sigma, \tau)$, so that
\begin{equation}
  W(A;\; X^*_{\rho, \sigma, \tau }) = [W_{\rho, \sigma, \tau} A(\rho, \sigma, \tau )] ( X^{*}_{\rho, \sigma, \tau }) := W (A(\rho, \sigma, \tau ); \; X^{*}_{\rho, \sigma, \tau }; \; \rho, \sigma, \tau),
\label{eq:decomposition}
\end{equation}
We are now in a state to provide the main theorem of the paper which is as follows
\begin{Theo}
\label{thm:construction-wigner-func}
The Wigner function for NCQM in 2-dimensions restricted to the 4-dimensional coadjoint orbit $\mathcal{O}^{k_{1},k_{2},k_{3}}_{4}$ for nonzero $k_{i}$'s satisfying $k_{1}^{2}\alpha^{2}-k_{2}k_{3}\gamma\beta\neq 0$ due to the 2-parameter family of UIRs of $\g$ (see (\ref{eq:generic-rep})) labeled by $(l,m)$ can be computed as 
\begin{eqnarray}
\lefteqn{ W^{l,m}(|\chi_{k_1,k_2,k_3}\rangle\langle\lambda_{k_1,k_2,k_3}|;\;k_{1}^{*},k_{2}^{*},k_{3}^{*},k_{4}^{*};\;k_1,k_2,k_3)}\nonumber\\
&&=\textstyle\frac{\vert\alpha\vert}{2\pi\vert k_{1}^{2}\alpha^{2}-k_{2}k_{3}\beta\gamma\vert^{\frac{1}{2}}}\int_{\mathbb{R}^{2}}e^{i\alpha\left[\frac{k_{1}\alpha k_{3}\gamma(1-l)k_{2}^{*}+(k_{1}^{2}\alpha^{2}-k_{2}\beta k_{3}\gamma l)k_{3}^{*}}{k_{2}\beta k_{3}\gamma-k_{1}^{2}\alpha^{2}}\right]\widetilde{q}_{1}+i\alpha\left[\frac{k_{1}\alpha k_{3}\gamma lk_{1}^{*}-k_{1}^{2}\alpha^{2}k_{4}^{*}}{k_{1}^{2}\alpha^{2}-k_{2}\beta k_{3}\gamma l}\right]\widetilde{q}_{2}}\nonumber\\
&&\scriptscriptstyle\times\overline{\lambda_{k_{1},k_{2},k_{3}}\left(\frac{\widetilde{q}_{1}}{2}+\frac{[k_{3}\gamma k_{2}\beta l(1-m)-k_{1}^{2}\alpha^{2}]k_{1}^{*}+mk_{1}k_{2}\alpha\beta k_{4}^{*}}{k_{1}(k_{1}^{2}\alpha^{2}-k_{2}\beta k_{3}\gamma l)},\frac{\widetilde{q}_{2}}{2}+\frac{k_{1}\alpha[k_{2}\beta k_{3}\gamma(l+m-lm)-k_{1}^{2}\alpha^{2}]k_{2}^{*}+(1-m)k_{2}\beta(k_{3}\gamma k_{2}\beta l-k_{1}^{2}\alpha^{2})k_{3}^{*}}{k_{1}^{2}\alpha(k_{1}^{2}\alpha^{2}-k_{2}\beta k_{3}\gamma)}\right)}\nonumber\\
&&\scriptstyle\times\chi_{k_{1},k_{2},k_{3}}\left(\frac{-\widetilde{q}_{1}}{2}+\frac{[k_{3}\gamma k_{2}\beta l(1-m)-k_{1}^{2}\alpha^{2}]k_{1}^{*}+mk_{1}k_{2}\alpha\beta k_{4}^{*}}{k_{1}(k_{1}^{2}\alpha^{2}-k_{2}\beta k_{3}\gamma l)},\frac{-\widetilde{q}_{2}}{2}+\frac{k_{1}\alpha[k_{2}\beta k_{3}\gamma(l+m-lm)-k_{1}^{2}\alpha^{2}]k_{2}^{*}+(1-m)k_{2}\beta(k_{3}\gamma k_{2}\beta l-k_{1}^{2}\alpha^{2})k_{3}^{*}}{k_{1}^{2}\alpha(k_{1}^{2}\alpha^{2}-k_{2}\beta k_{3}\gamma)}\right)\nonumber\\
&&\scriptstyle\times d\widetilde{q}_{1}d\widetilde{q}_{2},
\label{NC-Wig-func-exct-exprssn}
\end{eqnarray}
where the Hilbert-Schmidt operator $|\chi_{k_1,k_2,k_3}\rangle\langle\lambda_{k_1,k_2,k_3}|\in\mathcal{B}_{2}(k_1, k_2, k_3) = \mathcal{B}_{2}(L^{2}(\mathbb{R}^{2},dr_{1}dr_{2})$ with $L^{2}(\mathbb{R}^{2},dr_{1}dr_{2})$ being the representation space  of the UIRs of $\g$ given by (\ref{eq:generic-rep}).
\end{Theo}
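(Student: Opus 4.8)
The plan is to start from the closed expression (\ref{wig-fcn-def2}) for the Wigner map, specialise the measurable operator field to the rank-one field concentrated at the single point $(k_1,k_2,k_3)$ of $\gd$, namely $A(\omega,\nu,\mu)=|\chi_{k_1,k_2,k_3}\rangle\langle\lambda_{k_1,k_2,k_3}|$, and use the decomposability (\ref{eq:decomposition}) which holds because $\g$ is nilpotent. Since $\g$ is connected, simply connected and nilpotent, and the coordinates in (\ref{group-elem}) are canonical coordinates of the first kind (visible from the skew $\tfrac12$-factors in (\ref{grplawtriplyextendedgrp}), and as already used in \cite{wigfuncpaper}), the element $e^{-X}\in\g$ attached to $X\in\G$ with coordinates $(\theta,\phi,\psi,\bq,\bp)$ is $(-\theta,-\phi,-\psi,-\bq,-\bp)$. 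Feeding this into (\ref{eq:generic-rep}) and writing $\mathrm{Tr}\!\left[U^{\omega,\nu,\mu}_{l,m}(e^{-X})|\chi\rangle\langle\lambda|\right]=\int_{\mathbb R^2}\overline{\lambda_{\omega,\nu,\mu}(r_1,r_2)}\,\big(U^{\omega,\nu,\mu}_{l,m}(e^{-X})\chi_{\omega,\nu,\mu}\big)(r_1,r_2)\,dr_1dr_2$ turns (\ref{wig-fcn-def2}) into one explicit oscillatory integral over the seven group parameters, the variables $r_1,r_2$, and the dual labels $(\omega,\nu,\mu)$.

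First I would carry out the $\theta,\phi,\psi$ integrations. The central character $e^{+i\omega\theta+i\nu\phi+i\mu\psi}$ produced by $U^{\omega,\nu,\mu}_{l,m}(e^{-X})$, multiplied by the central part $e^{-ik_1\theta-ik_2\phi-ik_3\psi}$ of $e^{-i\langle X^*_{k_1,k_2,k_3};X\rangle}$ (in the normalisation of \cite{ncqmjpa,wigfuncpaper}), gives $(2\pi)^3\delta(\omega-k_1)\delta(\nu-k_2)\delta(\mu-k_3)$. This collapses the integral $\int_{\mathfrak N}d\nu_{\g}(\omega,\nu,\mu)$, evaluates the Plancherel density $\kappa$ of Proposition \ref{prop:Plancherel} at $(k_1,k_2,k_3)$, and is exactly the instance of (\ref{eq:decomposition}) needed here. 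What is left is a five-fold integral over $q_1,q_2,p_1,p_2,r_1,r_2$ carrying the overall constant $|\alpha|(2\pi)^{-6}|k_1^2\alpha^2-k_2k_3\gamma\beta|^{-1/2}\cdot(2\pi)^3\cdot\kappa(k_1,k_2,k_3)$.

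The remaining integral I would reduce by two changes of variables, mirroring the proof of Proposition \ref{prop:Plancherel}. The first is the centre-of-mass substitution $r_i\mapsto r_i\mp\tfrac12\xi_i$ in the $r$-integral, where $\bxi=\bxi(\bq,\bp)$ is the linear shift occurring in the argument of $\chi$ in (\ref{eq:generic-rep}) after the reflection $\bp\mapsto-\bp,\ \bq\mapsto-\bq$; this makes $\chi$ evaluated at $\br-\tfrac12\bxi$ and $\lambda$ at $\br+\tfrac12\bxi$. The second is a linear change of variables on $(q_1,q_2,p_1,p_2)$ whose first two new coordinates are proportional to $(\xi_1,\xi_2)=:(\widetilde q_1,\widetilde q_2)$ — these become the Wigner ``difference'' coordinates that one integrates last — and whose other two coordinates are chosen so that, upon integrating them out, the linear phases in them (from the $p_ir_i$, $q_ir_j$ terms of (\ref{eq:generic-rep}) and from $\langle X^*_{k_1,k_2,k_3};X\rangle$) collapse into two Dirac deltas. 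Solving those deltas pins $\br$ to the combination of $k_1^*,\dots,k_4^*$ appearing as the additive constant in the arguments of $\chi$ and $\lambda$ in (\ref{NC-Wig-func-exct-exprssn}), and produces a factor $(2\pi)^2$ times a Jacobian; assembling that Jacobian with the Jacobians of both changes of variables and the constants above should collapse to exactly the prefactor $|\alpha|\big/\big(2\pi|k_1^2\alpha^2-k_2k_3\beta\gamma|^{1/2}\big)$, leaving the asserted $\widetilde q_1,\widetilde q_2$ integral.

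The main obstacle is the bookkeeping of the Gaussian phases. In the proof of Proposition \ref{prop:Plancherel} every bilinear term in $(\bq,\bp)$ — the $p_1p_2$, $q_1q_2$, $p_1q_1$, $p_2q_2$ phases of (\ref{eq:generic-rep}) — cancelled because there were two representation matrix elements (one conjugated) together with the $\delta$'s identifying the primed and unprimed labels; here there is a single representation factor and these phases genuinely survive, as do further bilinear terms generated in the $p_ir_i$, $q_ir_j$ phases by the centre-of-mass shift and by the passage to the coordinates $(\widetilde q_1,\widetilde q_2,\cdot,\cdot)$. One must verify that the totality of these bilinear contributions either cancels or is removed by Fresnel (stationary-phase) evaluation of the two auxiliary integrals, so that the net phase is \emph{linear} in $\widetilde q_1,\widetilde q_2$ alone, with the $\widetilde q_1\widetilde q_2$ cross term absent and the surviving linear coefficients equal to the combinations of $k_1^*,\dots,k_4^*$ displayed in the exponential of (\ref{NC-Wig-func-exct-exprssn}). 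This is the coordinate manifestation of the fact, recalled in \cite{plancherel}, that the Wigner map intertwines the Weyl calculus with the coadjoint-orbit geometry, the extra Gaussians being artefacts of the non-polarised form of (\ref{gauge-equiv-reps-algbr}) already noted in the text; concretely it reduces to a mechanical identity among rational functions of $k_1,k_2,k_3,l,m$. Once this is settled the final $\widetilde q$-integral is read off directly; specialising to $(l,m)=(1,0)$ and $(l,m)=(l_s,\tfrac12)$ then recovers, as a consistency check, the Landau- and symmetric-gauge Wigner functions of \cite{wigfuncpaper}.
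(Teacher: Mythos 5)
Your proposal follows essentially the same route as the paper's own proof: evaluate (\ref{wig-fcn-def2}) with the explicit UIR (\ref{eq:generic-rep}), integrate the central variables $\theta,\phi,\psi$ to produce $(2\pi)^3\delta(\omega-k_1)\delta(\nu-k_2)\delta(\mu-k_3)$ collapsing the Plancherel integral, pass to the Wigner difference coordinates $\widetilde q_1,\widetilde q_2$ via the linear change of variables (\ref{change-of-var-proof-Wigfunc}), and integrate out $\widetilde p_1,\widetilde p_2$ to pin $r_1,r_2$ by Dirac deltas, with the surviving bilinear phases cancelling to leave a phase linear in $\widetilde q_1,\widetilde q_2$. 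The only cosmetic differences are that the paper lets the deltas produce the $\pm\widetilde q_i/2$ splitting rather than performing a centre-of-mass shift on $r$ beforehand, and the auxiliary integrals yield exact deltas (the $\widetilde p_1\widetilde p_2$ cross terms cancel) so no Fresnel evaluation is needed.
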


\begin{proof}
Choose a generic element $g$ of $\g$ to be $(-\theta,-\phi,-\psi,-\bq,\bp)$ so that inverse group element $g^{-1}$ is given by $(\theta,\phi,\psi,\bq,-\bp)$. Now, using the definition given in (\ref{wig-fcn-def2}), the Wigner function of $\g$ restricted to the 4-dimensional coadjoint orbits $\mathcal{O}^{k_{1},k_{2},k_{3}}_{4}$ reads
\begin{align}
\MoveEqLeft W(|\hat{\chi}_{k_1,k_2,k_3}\rangle\langle\hat{\lambda}_{k_1,k_2,k_3}|;k_{1}^{*},k_{2}^{*},k_{3}^{*},k_{4}^{*};k_1,k_2,k_3)\nonumber\\
&=\frac{\vert\alpha\vert}{(2\pi)^{6}\vert k_{1}^{2}\alpha^{2}-k_{2}k_{3}\gamma\beta\vert^{\frac{1}{2}}}\int_{\mathbb{R}^{7}}e^{-i\alpha(k_{1}^{*}p_{1}+k_{2}^{*}p_{2}-k_{3}^{*}q_{1}-k_{4}^{*}q_{2})}e^{-i(-k_{1}\theta-k_{2}\phi-k_{3}\psi)}\nonumber\\
&\scriptstyle\times\left[\int_{(\omega,\nu,\mu)}\int_{(r_{1},r_{2})\in\mathbb{R}^{2}}e^{-i\omega\theta-i\nu\phi-i\mu\psi}e^{-i\omega\alpha p_{1}r_{1}-i\omega\alpha p_{2}r_{2}+\frac{i\omega^{2}\alpha^{2}\mu\nu(1-l)}{\mu\gamma\nu\beta l-\omega^{2}\alpha^{2}}q_{1}r_{2}+il\mu\gamma q_{2}r_{1}-i\left[\frac{\omega\alpha}{2}+\frac{\omega\alpha\mu\gamma\nu\beta m(1-l)}{\mu\gamma\nu\beta l-\omega^{2}\alpha^{2}}\right]p_{1}q_{1}}\right.\nonumber\\
&\scriptstyle\times\left.e^{-i\left[\frac{\omega\alpha}{2}-\frac{l\mu\gamma\nu\beta(1-m)}{\omega\alpha}\right]p_{2}q_{2}+i\left(m-\frac{1}{2}\right)\nu\beta p_{1}p_{2}+i\left[\frac{\mu\nu}{2}-\frac{\mu\gamma(1-l)(\mu\gamma\nu\beta l-\mu\gamma\nu\beta lm-\omega^{2}\alpha^{2})}{\mu\gamma\nu\beta l-\omega^{2}\alpha^{2}}\right]q_{1}q_{2}}\overline{\lambda_{\omega,\nu,\mu}(r_{1},r_{2})}\right.\nonumber\\
&\scriptstyle\left.\times\chi_{\omega,\nu,\mu}\left(r_{1}+\frac{[1-m]\nu\beta}{\omega\alpha}p_{2}+\frac{\mu\gamma\nu\beta(l+m-lm)-\omega^{2}\alpha^{2}}{\mu\gamma\nu\beta l-\omega^{2}\alpha^{2}}q_{1},r_{2}-\frac{m\nu\beta}{\omega\alpha}p_{1}-\frac{[\mu\gamma\nu\beta l(1-m)-\omega^{2}\alpha^{2}]}{\omega^{2}\alpha^{2}}q_{2}\right)\frac{|\omega^{2}\alpha^{2}-\gamma\beta\sigma\tau|}{\alpha^{2}}dr_{1}dr_{2}d\omega d\nu d\mu\right]\nonumber\\
&\scriptstyle\times d\theta d\phi d\psi dq_{1}dq_{2}dp_{1}dp_{2}\nonumber\\
&=\scriptstyle\frac{{|k_{1}^{2}\alpha^{2}-k_{2}k_{3}\gamma\beta|}^{\frac{1}{2}}}{(2\pi^{3}|\alpha|)}\int_{\mathbb{R}^{4}}e^{-i\alpha k_{1}^{*}p_{1}-i\alpha k_{2}^{*}p_{2}+i\alpha k_{3}^{*}q_{1}+i\alpha k_{4}^{*}q_{2}-i\left[\frac{k_{1}\alpha}{2}+\frac{k_{1}\alpha k_{3}\gamma k_{2}\beta m(1-l)}{k_{3}\gamma k_{2}\beta l-k_{1}^{2}\alpha^{2}}\right]p_{1}q_{1}-i\left[\frac{k_{1}\alpha}{2}-\frac{lk_{3}\gamma k_{2}\beta(1-m)}{k_{1}\alpha}\right]p_{2}q_{2}}\nonumber\\
&\scriptstyle\times e^{i\left(m-\frac{1}{2}\right)k_{2}\beta p_{1}p_{2}+i\left[\frac{k_{3}\gamma}{2}-\frac{k_{3}\gamma(1-l)(k_{3}\gamma k_{2}\beta l-k_{3}\gamma k_{2}\beta lm-k_{1}^{2}\alpha^{2})}{k_{3}\gamma k_{2}\beta l-k_{1}^{2}\alpha^{2}}\right]q_{1}q_{2}}\left[\int_{(r_{1},r_{2})}e^{-i\alpha k_{1}p_{1}r_{1}-i\alpha k_{1}p_{2}r_{2}+\frac{ik_{1}^{2}\alpha^{2}k_{3}\gamma(1-l)}{k_{3}\gamma k_{2}\beta l-k_{1}^{2}\alpha^{2}}q_{1}r_{2}}\right.\nonumber\\
&\scriptscriptstyle\left.\times e^{ilk_{3}\gamma q_{2}r_{1}}\overline{\lambda_{k_{1},k_{2},k_{3}}(r_{1},r_{2})}\chi_{k_{1},k_{2},k_{3}}\left(r_{1}+\frac{[1-m]k_{2}\beta}{k_{1}\alpha}p_{2}+\frac{k_{3}\gamma k_{2}\beta(l+m-lm)-k_{1}^{2}\alpha^{2}}{k_{3}\gamma k_{2}\beta l-k_{1}^{2}\alpha^{2}}q_{1},r_{2}-\frac{mk_{2}\beta}{k_{1}\alpha}p_{1}-\frac{[k_{3}\gamma k_{2}\beta l(1-m)-k_{1}^{2}\alpha^{2}]}{k_{1}^{2}\alpha^{2}}q_{2}\right)\right.\nonumber\\
&\scriptstyle\left.\times dr_{1}dr_{2}\right]dq_{1}dq_{2}dp_{1}dp_{2}\nonumber\\
&=\scriptstyle\frac{{|k_{1}^{2}\alpha^{2}-k_{2}k_{3}\gamma\beta|}^{\frac{1}{2}}}{(2\pi^{3}|\alpha|)}\int_{\mathbb{R}^{4}}e^{-i\alpha k_{1}^{*}\widetilde{p}_{1}-i\alpha k_{2}^{*}\widetilde{p}_{2}+\frac{i\alpha k_{3}^{*}(k_{1}^{2}\alpha^{2}-k_{3}\gamma k_{2}\beta l)}{[k_{3}\gamma k_{2}\beta(l+m-lm)-k_{1}^{2}\alpha^{2}]}\widetilde{q}_{1}-\frac{ik_{3}^{*}(1-m)k_{2}\beta(k_{3}\gamma k_{2}\beta l-k_{1}^{2}\alpha^{2})}{k_{1}[k_{3}\gamma k_{2}\beta(l+m-lm)-k_{1}^{2}\alpha^{2}]}\widetilde{p}_{2}}\nonumber\\
&\scriptstyle\times e^{\frac{ik_{4}^{*}k_{1}^{2}\alpha^{3}}{k_{3}\gamma k_{2}\beta l(1-m)-k_{1}^{2}\alpha^{2}}\widetilde{q}_{2}-\frac{imk_{4}^{*}k_{1}\alpha^{2}k_{2}\beta}{k_{3}\gamma k_{2}\beta l(1-m)-k_{1}^{2}\alpha^{2}}\widetilde{p}_{1}+ik_{1}\alpha\widetilde{q}_{1}\widetilde{p}_{1}+\frac{ik_{1}\alpha(k_{1}^{2}\alpha^{2}-k_{2}\beta k_{3}\gamma l)}{2[k_{2}\beta k_{3}\gamma(l+m-lm)-k_{1}^{2}\alpha^{2}]}\widetilde{q}_{1}\widetilde{p}_{1}}\nonumber\\
&\scriptstyle\times e^{\frac{imk_{2}\beta(k_{2}\beta k_{3}\gamma-k_{1}^{2}\alpha^{2})}{2[k_{2}\beta k_{3}\gamma(l+m-lm)-k_{1}^{2}\alpha^{2}]}\widetilde{p}_{1}\widetilde{p}_{2}-\frac{ilk_{3}\gamma k_{1}^{2}\alpha^{2}}{k_{3}\gamma k_{2}\beta l(1-m)-k_{1}^{2}\alpha^{2}}\widetilde{q}_{1}\widetilde{q}_{2}+\frac{ilmk_{1}\alpha k_{2}\beta k_{3}\gamma}{k_{3}\gamma k_{2}\beta l(1-m)-k_{1}^{2}\alpha^{2}}\widetilde{p}_{1}\widetilde{q}_{1}-\frac{ik_{1}^{3}\alpha^{3}}{2[k_{3}\gamma k_{2}\beta l(1-m)-k_{1}^{2}\alpha^{2}]}\widetilde{p}_{2}\widetilde{q}_{2}}\nonumber\\
&\scriptstyle\times e^{\frac{imk_{1}^{2}\alpha^{2}k_{2}\beta}{2[k_{3}\gamma k_{2}\beta l(1-m)-k_{1}^{2}\alpha^{2}]}\widetilde{p}_{1}\widetilde{p}_{2}-\frac{ik_{1}^{2}\alpha^{2}k_{3}\gamma(k_{1}^{2}\alpha^{2}-k_{3}\gamma k_{2}\beta l)}{2[k_{3}\gamma k_{2}\beta(l+m-lm)-k_{1}^{2}\alpha^{2}][k_{3}\gamma k_{2}\beta l(1-m)-k_{1}^{2}\alpha^{2}]}\widetilde{q}_{1}\widetilde{q}_{2}}\nonumber\\
&\scriptstyle\times e^{\frac{imk_{1}\alpha k_{2}\beta k_{3}\gamma(k_{1}^{2}\alpha^{2}-k_{3}\gamma k_{2}\beta l)}{2[k_{3}\gamma k_{2}\beta(l+m-lm)-k_{1}^{2}\alpha^{2}][k_{3}\gamma k_{2}\beta l(1-m)-k_{1}^{2}\alpha^{2}]}\widetilde{p}_{1}\widetilde{q}_{1}+\frac{i(1-m)k_{1}\alpha k_{2}\beta k_{3}\gamma(k_{3}\gamma k_{2}\beta l-k_{1}^{2}\alpha^{2})}{2[k_{3}\gamma k_{2}\beta(l+m-lm)-k_{1}^{2}\alpha^{2}][k_{3}\gamma k_{2}\beta l(1-m)-k_{1}^{2}\alpha^{2}]}\widetilde{p}_{2}\widetilde{q}_{2}}\nonumber\\
&\scriptstyle\times {e^{-\frac{im(1-m)k_{2}^{2}\beta^{2}k_{3}\gamma(k_{3}\gamma k_{2}\beta l-k_{1}^{2}\alpha^{2})}{2[k_{3}\gamma k_{2}\beta(l+m-lm)-k_{1}^{2}\alpha^{2}][k_{3}\gamma k_{2}\beta l(1-m)-k_{1}^{2}\alpha^{2}]}\widetilde{p}_{1}\widetilde{p}_{2}}}\left[\int_{(r_{1},r_{2})}e^{-i\alpha k_{1}\widetilde{p}_{1}r_{1}-i\alpha k_{1}\widetilde{p}_{2}r_{2}-\frac{ik_{1}^{2}\alpha^{2}k_{3}\gamma(1-l)}{[k_{3}\gamma k_{2}\beta(l+m-lm)-k_{1}^{2}\alpha^{2}]}\widetilde{q}_{1}r_{2}}\right.\nonumber\\
&\scriptstyle\left.\times e^{-\frac{i(1-l)(1-m)k_{1}\alpha k_{2}\beta k_{3}\gamma}{[k_{3}\gamma k_{2}\beta(l+m-lm)-k_{1}^{2}\alpha^{2}]}\widetilde{p}_{2}r_{2}+\frac{ilk_{3}\gamma k_{1}^{2}\alpha^{2}}{[k_{3}\gamma k_{2}\beta l(1-m)-k_{1}^{2}\alpha^{2}]}\widetilde{q}_{2}r_{1}-\frac{ilmk_{1}\alpha k_{2}\beta k_{3}\gamma}{[k_{3}\gamma k_{2}\beta l(1-m)-k_{1}^{2}\alpha^{2}]}\widetilde{p}_{1}r_{1}}\overline{\lambda_{k_{1},k_{2},k_{3}}(r_{1},r_{2})}\right.\nonumber\\
&\scriptstyle\left.\times\chi_{k_{1},k_{2},k_{3}}(r_{1}-\widetilde{q}_{1},r_{2}-\widetilde{q}_{2})dr_{1}dr_{2}\right]\frac{k_{1}^{2}\alpha^{2}|k_{1}^{2}\alpha^{2}-k_{3}\gamma k_{2}\beta l|}{|k_{3}\gamma k_{2}\beta(l+m-lm)-k_{1}^{2}\alpha^{2}||k_{3}\gamma k_{2}\beta l(1-m)-k_{1}^{2}\alpha^{2}|}d\widetilde{q}_{1}d\widetilde{q}_{2}d\widetilde{p}_{1}d\widetilde{p}_{2},
\label{eq:NC-Wig-func-dervtn-first-prt}
\end{align}
where in the last line we have introduced the following change of variables:
\begin{equation}\label{change-of-var-proof-Wigfunc}
\begin{split}
\widetilde{q}_{1}&=-\frac{[k_{3}\gamma k_{2}\beta(l+m-lm)-k_{1}^{2}\alpha^{2}]}{k_{3}\gamma k_{2}\beta l-k_{1}^{2}\alpha^{2}}q_{1}-\frac{(1-m)k_{2}\beta}{k_{1}\alpha}p_{2},\\
\widetilde{q}_{2}&=\frac{k_{3}\gamma k_{2}\beta l(1-m)-k_{1}^{2}\alpha^{2}}{k_{1}^{2}\alpha^{2}}q_{2}+\frac{mk_{2}\beta}{k_{1}\alpha}p_{1},\\
\widetilde{p}_{1}&=p_{1},\\
\widetilde{p}_{2}&=p_{2}.
\end{split}
\end{equation}
Now (\ref{eq:NC-Wig-func-dervtn-first-prt}) can be manipulated to lead to the following expression
\begin{eqnarray}\label{Wigner-function-proof-second}
\lefteqn{\textstyle\frac{|\alpha|}{2\pi{|k_{1}^{2}\alpha^{2}-k_{2}\beta k_{3}\gamma|}^{\frac{1}{2}}}\int_{(\widetilde{q}_{1},\widetilde{q}_{2})\in\mathbb{R}^{2}}e^{\frac{i\alpha k_{3}^{*}(k_{1}^{2}\alpha^{2}-k_{3}\gamma k_{2}\beta l)}{[k_{3}\gamma k_{2}\beta(l+m-lm)-k_{1}^{2}\alpha^{2}]}\widetilde{q}_{1}+\frac{ik_{4}^{*}k_{1}^{2}\alpha^{3}}{[k_{3}\gamma k_{2}\beta l(1-m)-k_{1}^{2}\alpha^{2}]}\widetilde{q}_{2}-\frac{ik_{1}^{2}\alpha^{2}k_{3}\gamma l}{[k_{3}\gamma k_{2}\beta l(1-m)-k_{1}^{2}\alpha^{2}]}\widetilde{q}_{1}\widetilde{q}_{2}}}\nonumber\\
&&\scriptstyle\times e^{-\frac{ik_{3}\gamma k_{1}^{2}\alpha^{2}(k_{1}^{2}\alpha^{2}-k_{3}\gamma k_{2}\beta l)}{2[k_{3}\gamma k_{2}\beta(l+m-lm)-k_{1}^{2}\alpha^{2}][k_{3}\gamma k_{2}\beta l(1-m)-k_{1}^{2}\alpha^{2}]}\widetilde{q}_{1}\widetilde{q}_{2}}\left[\int_{(r_{1},r_{2})}e^{-\frac{ik_{1}^{2}\alpha^{2}k_{3}\gamma(1-l)}{[k_{3}\gamma k_{2}\beta(l+m-lm)-k_{1}^{2}\alpha^{2}]}\widetilde{q}_{1}r_{2}+\frac{ik_{1}^{2}\alpha^{2}k_{3}\gamma l}{[k_{2}\beta k_{3}\gamma l(1-m)-k_{1}^{2}\alpha^{2}]}\widetilde{q}_{2}r_{1}}\right.\nonumber\\
&&\scriptstyle\left.\times \delta\left(r_{1}-\frac{k_{1}^{*}[k_{3}\gamma k_{2}\beta l(1-m)-k_{1}^{2}\alpha^{2}]}{k_{1}(k_{1}^{2}\alpha^{2}-k_{2}\beta k_{3}\gamma l)}-\frac{mk_{4}^{*}\alpha k_{2}\beta}{[k_{1}^{2}\alpha^{2}-k_{2}\beta k_{3}\gamma l]}-\frac{\widetilde{q}_{1}}{2}\right)\delta\left(r_{2}-\frac{k_{2}^{*}[k_{2}\beta k_{3}\gamma(l+m-lm)-k_{1}^{2}\alpha^{2}]}{k_{1}(k_{1}^{2}\alpha^{2}-k_{2}\beta k_{3}\gamma)}-\frac{(1-m)k_{3}^{*}k_{2}\beta(k_{3}\gamma k_{2}\beta l-k_{1}^{2}\alpha^{2})}{k_{1}^{2}\alpha(k_{1}^{2}\alpha^{2}-k_{2}\beta k_{3}\gamma)}-\frac{\widetilde{q}_{2}}{2}\right)\right.\nonumber\\
&&\scriptstyle\left.\times\overline{\lambda_{k_{1},k_{2},k_{3}}(r_{1},r_{2})}\chi_{k_{1},k_{2},k_{3}}(r_{1}-\widetilde{q}_{1},r_{2}-\widetilde{q}_{2})dr_{1}dr_{2}\right]d\widetilde{q}_{1}d\widetilde{q}_{2}\nonumber\\
&&=\textstyle\frac{\vert\alpha\vert}{2\pi\vert k_{1}^{2}\alpha^{2}-k_{2}k_{3}\beta\gamma\vert^{\frac{1}{2}}}\int_{\mathbb{R}^{2}}e^{i\alpha\left[\frac{k_{1}\alpha k_{3}\gamma(1-l)k_{2}^{*}+(k_{1}^{2}\alpha^{2}-k_{2}\beta k_{3}\gamma l)k_{3}^{*}}{k_{2}\beta k_{3}\gamma-k_{1}^{2}\alpha^{2}}\right]\widetilde{q}_{1}+i\alpha\left[\frac{k_{1}\alpha k_{3}\gamma lk_{1}^{*}-k_{1}^{2}\alpha^{2}k_{4}^{*}}{k_{1}^{2}\alpha^{2}-k_{2}\beta k_{3}\gamma l}\right]\widetilde{q}_{2}}\nonumber\\
&&\scriptscriptstyle\times\overline{\lambda_{k_{1},k_{2},k_{3}}\left(\frac{\widetilde{q}_{1}}{2}+\frac{[k_{3}\gamma k_{2}\beta l(1-m)-k_{1}^{2}\alpha^{2}]k_{1}^{*}+mk_{1}k_{2}\alpha\beta k_{4}^{*}}{k_{1}(k_{1}^{2}\alpha^{2}-k_{2}\beta k_{3}\gamma l)},\frac{\widetilde{q}_{2}}{2}+\frac{k_{1}\alpha[k_{2}\beta k_{3}\gamma(l+m-lm)-k_{1}^{2}\alpha^{2}]k_{2}^{*}+(1-m)k_{2}\beta(k_{3}\gamma k_{2}\beta l-k_{1}^{2}\alpha^{2})k_{3}^{*}}{k_{1}^{2}\alpha(k_{1}^{2}\alpha^{2}-k_{2}\beta k_{3}\gamma)}\right)}\nonumber\\
&&\scriptstyle\times\chi_{k_{1},k_{2},k_{3}}\left(\frac{-\widetilde{q}_{1}}{2}+\frac{[k_{3}\gamma k_{2}\beta l(1-m)-k_{1}^{2}\alpha^{2}]k_{1}^{*}+mk_{1}k_{2}\alpha\beta k_{4}^{*}}{k_{1}(k_{1}^{2}\alpha^{2}-k_{2}\beta k_{3}\gamma l)},\frac{-\widetilde{q}_{2}}{2}+\frac{k_{1}\alpha[k_{2}\beta k_{3}\gamma(l+m-lm)-k_{1}^{2}\alpha^{2}]k_{2}^{*}+(1-m)k_{2}\beta(k_{3}\gamma k_{2}\beta l-k_{1}^{2}\alpha^{2})k_{3}^{*}}{k_{1}^{2}\alpha(k_{1}^{2}\alpha^{2}-k_{2}\beta k_{3}\gamma)}\right)\nonumber\\
&&\scriptstyle\times d\widetilde{q}_{1}d\widetilde{q}_{2}.
\end{eqnarray}
\end{proof}

\begin{remark}
\label{remark-wigfunc}
A few remarks on theorem \ref{thm:construction-wigner-func} are in order. First, we note that (\ref{NC-Wig-func-exct-exprssn}) can be rewritten as
\begin{align}
\MoveEqLeft W(A_{\rho,\sigma,\tau};\bq_{\mathrm{nc}},\bp_{\mathrm{nc}};k_1,k_2,k_3)\nonumber\\
&=\frac{\vert\alpha\vert}{2\pi\vert k_{1}^{2}\alpha^{2}-k_{2}k_{3}\beta\gamma\vert^{\frac{1}{2}}}\int_{\mathbb{R}^{2}}e^{-i\alpha p^{l}_{1,\mathrm{nc}}r_{1}-i\alpha p^{l}_{2,\mathrm{nc}}r_{2}}\;\overline{\lambda_{k_{1},k_{2},k_{3}}\left(\frac{1}{2}r_{1}-\frac{q^{l,m}_{1,\mathrm{nc}}}{k_{1}},\frac{1}{2}r_{2}-\frac{q^{l,m}_{2,\mathrm{nc}}}{k_{1}}\right)}\nonumber\\
&\times\chi_{k_{1},k_{2},k_{3}}\left(-\frac{1}{2}r_{1}-\frac{q^{l,m}_{1,\mathrm{nc}}}{k_{1}},-\frac{1}{2}r_{2}-\frac{q^{l,m}_{2,\mathrm{nc}}}{k_{1}}\right)dr_{1}dr_{2},
\label{reprmtrzd-exprssn-wig-func}
\end{align}
with the {\em ``noncommuting coordinates"} $\bq^{l,m}_{\mathrm{nc}}\equiv(q^{l,m}_{1,\mathrm{nc}},q^{l,m}_{2,\mathrm{nc}}),\bp^{l}_{\mathrm{nc}}\equiv(p^{l}_{1,\mathrm{nc}},p^{l}_{2,\mathrm{nc}})$, in terms of the phase space coordinates $k_{1}^{*}$, $k_{2}^{*}$, $k_{3}^{*}$ and $k_{4}^{*}$ can be read off immediately as
\begin{equation}
\label{int-vrbls-reprmtrzn}
\begin{aligned}
&q^{l,m}_{1,\mathrm{nc}}=\frac{[k_{3}\gamma k_{2}\beta l(1-m)-k_{1}^{2}\alpha^{2}]k_{1}^{*}+mk_{1}k_{2}\alpha\beta k_{4}^{*}}{(k_{2}\beta k_{3}\gamma l-k_{1}^{2}\alpha^{2})},\\
&q^{l,m}_{2,\mathrm{nc}}=\frac{k_{1}\alpha[k_{2}\beta k_{3}\gamma(l+m-lm)-k_{1}^{2}\alpha^{2}]k_{2}^{*}+(1-m)k_{2}\beta(k_{3}\gamma k_{2}\beta l-k_{1}^{2}\alpha^{2})k_{3}^{*}}{k_{1}\alpha(k_{2}\beta k_{3}\gamma-k_{1}^{2}\alpha^{2})},\\
&p^{l}_{1,\mathrm{nc}}=\frac{k_{1}\alpha k_{3}\gamma(1-l)k_{2}^{*}+(k_{1}^{2}\alpha^{2}-k_{2}\beta k_{3}\gamma l)k_{3}^{*}}{k_{1}^{2}\alpha^{2}-k_{2}\beta k_{3}\gamma},\\
&p^{l}_{2,\mathrm{nc}}=\frac{k_{1}^{2}\alpha^{2}k_{4}^{*}-k_{1}\alpha k_{3}\gamma lk_{1}^{*}}{k_{1}^{2}\alpha^{2}-k_{2}\beta k_{3}\gamma l}.
\end{aligned}
\end{equation}
Note the absence of the gauge parameter $m$ in the noncommuting momenta coordinates $\bp^{l}_{\mathrm{nc}}$ as given in (\ref{int-vrbls-reprmtrzn}). Now, $l=1$ and $m=0$ in (\ref{int-vrbls-reprmtrzn}) yields the noncommuting coordinates in the {\em Landau gauge} representation (see section \ref{sec:gauge-class} for details) as given below
\begin{equation}
\label{int-vrbls-reprmtrzn-landau gauge}
\begin{aligned}
q^{1,0}_{1,\mathrm{nc}}&=k_{1}^{*},\\
q^{1,0}_{2,\mathrm{nc}}&=\frac{k_{1}\alpha k_{2}^{*}+k_{2}\beta k_{3}^{*}}{k_{1}\alpha},\\
p^{1}_{1,\mathrm{nc}}&=k_{3}^{*},\\
p^{1}_{2,\mathrm{nc}}&=\frac{k_{1}^{2}\alpha^{2}k_{4}^{*}-k_{1}k_{3}\alpha\gamma k_{1}^{*}}{k_{1}^{2}\alpha^{2}-k_{2}k_{3}\beta\gamma}.
\end{aligned}
\end{equation}

For $l=1$ and $m=0$, by inserting the Fourier transform of $\lambda_{k_1,k_2,k_3},\chi_{k_1,k_2,k_3}\in L^{2}(\mathbb{R}^{2},dr_{1}dr_{2})$ in (\ref{reprmtrzd-exprssn-wig-func}), one obtains the NCQM Wigner function in the Landau gauge representation. Note that in \cite{wigfuncpaper}, Landau gauge representation of NCQM was taken to be the one corresponding to $l=m=1$ in (\ref{gauge-equiv-reps-algbr}). Here, we choose a unitary irreducible representation of $\g$ due to $l=1$ and $m=0$ in (\ref{eq:generic-rep}) that is unitarily equivalent to the one used in \cite{wigfuncpaper} and call it the Landau gauge representation. It is noteworthy that (2.17) of \cite{wigfuncpaper} can be obtained from (\ref{int-vrbls-reprmtrzn}) by plugging in $l=m=1$.

Recall from \cite{plethora} (see remark V.2) that one obtains the {\em symmetric gauge} representation of NCQM  by substituting $l=\frac{\rho\alpha(\rho\alpha-\sqrt{\rho^{2}\alpha^{2}-\tau\gamma\sigma\beta})}{\tau\gamma\sigma\beta}$ and $m=\frac{1}{2}$ in (\ref{eq:generic-rep}). But, there, the UIRs of $\G$ correspond to the coadjoint orbits associated with the nonzero fixed triple $(\rho,\sigma,\tau)$ satisfying $\rho^{2}\alpha^{2}-\tau\gamma\sigma\beta\neq 0$. In this paper, our study concerns generic coadjoint orbits determined by $\rho=k_1$, $\sigma=k_2$ and $\tau=k_3$ satisfying $k_{1}^{2}\alpha^{2}-k_{2}\beta k_{3}\gamma\neq 0$ so that the values of the 2-parameters $l$ and $m$ that determine the symmetric gauge condition are given by $l_{s}:=\frac{k_{1}\alpha(k_{1}\alpha-\sqrt{k_{1}^{2}\alpha^{2}-k_{2}\beta k_{3}\gamma})}{k_{2}\beta k_{3}\gamma}$ and $m=\frac{1}{2}$.  For the symmetric gauge representation of NCQM, the noncommuting coordinates (\ref{int-vrbls-reprmtrzn}) now read
\begin{equation}
\label{int-vrbls-reprmtrzn-symmetric-gauge}
\begin{split}
\textstyle q^{l_{s},\frac{1}{2}}_{1,\mathrm{nc}}&=\textstyle\frac{(k_{1}\alpha+\sqrt{k_{1}^{2}\alpha^{2}-k_{2}\beta k_{3}\gamma})k_{1}^{*}-k_{2}\beta k_{4}^{*}}{2\sqrt{k_{1}^{2}\alpha^{2}-k_{2}\beta k_{3}\gamma}},\\
\textstyle q^{l_{s},\frac{1}{2}}_{2,\mathrm{nc}}&=\textstyle\frac{(k_{1}^{2}\alpha^{2}-k_{2}\beta k_{3}\gamma+k_{1}\alpha\sqrt{k_{1}^{2}\alpha^{2}-k_{2}\beta k_{3}\gamma})k_{2}^{*}+(k_{2}\beta\sqrt{k_{1}^{2}\alpha^{2}-k_{2}\beta k_{3}\gamma})k_{3}^{*}}{2(k_{1}^{2}\alpha^{2}-k_{2}\beta k_{3}\gamma)},\\
\textstyle p^{l_{s}}_{1,\mathrm{nc}}&=\textstyle\frac{k_{1}\alpha(k_{2}\beta k_{3}\gamma-k_{1}^{2}\alpha^{2}+k_{1}\alpha\sqrt{k_{1}^{2}\alpha^{2}-k_{2}\beta k_{3}\gamma})k_{2}^{*}+(k_{1}\alpha k_{2}\beta\sqrt{k_{1}^{2}\alpha^{2}-k_{2}\beta k_{3}\gamma})k_{3}^{*}}{k_{2}\beta(k_{1}^{2}\alpha^{2}-k_{2}\beta k_{3}\gamma)},\\
\textstyle p^{l_{s}}_{2,\mathrm{nc}}&=\textstyle\frac{k_{1}\alpha k_{2}\beta k_{4}^{*}-k_{1}\alpha(k_{1}\alpha-\sqrt{k_{1}^{2}\alpha^{2}-k_{2}\beta k_{3}\gamma})k_{1}^{*}}{k_{2}\beta\sqrt{k_{1}^{2}\alpha^{2}-k_{2}\beta k_{3}\gamma}}.
\end{split}
\end{equation}

Substituting (\ref{int-vrbls-reprmtrzn-symmetric-gauge}) into (\ref{reprmtrzd-exprssn-wig-func}) then yields the expression of the NCQM Wigner function due to the symmetric gauge representation of $\g$.

\end{remark}

\section{Conclusion and future perspectives}

In the above, we have calculated the NC Wigner functions for the general gauge setting with the explicit two-gauge parameter $(l,m)$-dependence. The gauge parameter $l$ is associated to the gauge choice of the magnetic field and with specific settings can be shown to be reducible to either the Landau or symmetric gauge choices. It is also of interest to see how the spatial noncommutativity influences the computation explicitly. In the last two equations of \ref{gauge-equiv-reps-algbr-rarrngd}, the parameter $\sigma$ that signifies the spatial noncommutativity appears in the coefficient of the derivative operators. This in turn modifies the momenta commutator or the magnetic field expression. However in the limit of $\sigma\rightarrow 0$, one recovers the normal magnetic field in the momenta commutator. The modification of the magnetic field by the spatial noncommutativity seems to be in agreement with the work of Delduc et al. \cite{DelducEtAl2008}.

While the group-theoretic setting ensures the equivalence of representations under a unitarily equivalent class, further work can be done on the gauge equivalence of the Wigner functions under this two-parameter gauge setting by making the equivalence explicit. In particular, it is of interest to us to consider a new form of *-product equivalence in relation to the gauge equivalence. This will then complete the intertwining of the gauge symmetries available to the system with its quantum kinematical symmetries.

\end{document}